\documentclass[lettersize,hidelinks,journal]{IEEEtran}

\usepackage{cite}
\usepackage{amsmath,amssymb,amsfonts}
\usepackage{graphicx}
\usepackage{textcomp}
\usepackage[dvipsnames]{xcolor}
\usepackage[acronym,shortcuts]{glossaries}
\usepackage{bm}
\usepackage{caption}
\usepackage{subcaption}
\usepackage{relsize}
\usepackage{soul}
\usepackage{algorithm, algpseudocode}
\usepackage{algcompatible}
\usepackage{outlines}
\usepackage{orcidlink}
\usepackage{lipsum,comment}
\usepackage{mathtools}
\usepackage{tabularx,flushend}
\usepackage{amsthm}
\newtheorem{remark}{Remark}
\newtheorem{theorem}{Theorem}
\theoremstyle{definition}
\newtheorem{lemma}{Lemma}

\def\BibTeX{{\rm B\kern-.05em{\sc i\kern-.025em b}\kern-.08em
T\kern-.1667em\lower.7ex\hbox{E}\kern-.125emX}}

\newcommand{\trans}[0]{^{\mathsf{T}}}

\newcommand{\herm}[0]{^{\mathsf{H}}}

\newacronym{RPE}{RPE}{radar parameter estimation}
\newacronym{OTFS}{OTFS}{orthogonal time frequency space}
\newacronym{AFDM}{AFDM}{affine frequency division multiplexing}
\newacronym{MIMO}{MIMO}{multiple-input multiple-output}
\newacronym{SISO}{SISO}{single-input single-output}
\newacronym{ISAC}{ISAC}{integrated sensing and communications}
\newacronym{3D}{3D}{three-dimensional}
\newacronym{2D}{2D}{two-dimensional}
\newacronym{1D}{1D}{one-dimensional}
\newacronym{RX}{RX}{receiver}
\newacronym{TX}{TX}{transmitter}
\newacronym{BF}{BF}{beamforming}
\newacronym{mmWave}{mmWave}{millimeter-wave}
\newacronym{SotA}{SotA}{state-of-the-art}
\newacronym{ULA}{ULA}{uniform linear array}
\newacronym{QAM}{QAM}{quadrature amplitude modulation}
\newacronym{ISFFT}{ISFFT}{inverse symplectic finite Fourier transform}
\newacronym{SFFT}{SFFT}{symplectic finite Fourier transform}
\newacronym{AWGN}{AWGN}{additive white Gaussian noise}
\newacronym{OFDM}{OFDM}{orthogonal frequency division multiplexing}
\newacronym{OCDM}{OCDM}{orthogonal chirp division multiplexing}
\newacronym{BS}{BS}{base station}
\newacronym{UE}{UE}{user equipment}
\newacronym{DFT}{DFT}{discrete Fourier transform}
\newacronym{IDFT}{IDFT}{inverse discrete Fourier transform}
\newacronym{TD}{TD}{time-domain}
\newacronym{wlg}{wlg}{without loss of generality}
\newacronym{CP}{CP}{cyclic prefix}
\newacronym{DAFT}{DAFT}{discrete affine Fourier transform}
\newacronym{IDAFT}{IDAFT}{inverse discrete affine Fourier transform}
\newacronym{CPP}{CPP}{\textit{chirp-periodic} prefix}
\newacronym{IDZT}{IDZT}{inverse discrete Zak transform}
\newacronym{DZT}{DZT}{discrete Zak transform}
\newacronym{ICI}{ICI}{inter-carrier interference}
\newacronym{BER}{BER}{bit error rate}
\newacronym{DoF}{DoF}{degrees-of-freedom}
\newacronym{FD}{FD}{full-duplex}
\newacronym{SIMO}{SIMO}{single-input multiple-output}
\newacronym{MISO}{MISO}{multiple-input single-output}
\newacronym{AoD}{AoD}{angle-of-departure}
\newacronym{AoA}{AoA}{angle-of-arrival}
\newacronym{RF}{RF}{radio frequency}
\newacronym{SIM}{SIM}{stacked intelligent metasurfaces}
\newacronym{FIM}{FIM}{flexible intelligent metasurface}
\newacronym{FPGA}{FPGA}{field programmable gate array}
\newacronym{UPA}{UPA}{uniform planar array}
\newacronym{CC}{CC}{communication-centric}
\newacronym{I/O}{I/O}{input-output}
\newacronym{iid}{i.i.d.}{independent and identically distributed}
\newacronym{IoT}{IoT}{internet of things}
\newacronym{V2X}{V2X}{vehicle-to-everything}
\newacronym{NTN}{NTN}{non-terrestrial network}
\newacronym{LEO}{LEO}{low-earth orbit}
\newacronym{THz}{THz}{terahertz}
\newacronym{EM}{EM}{electromagnetic}
\newacronym{STAR-RIS}{STAR-RIS}{simultaneously transmitting and reflecting reconfigurable intelligent surface}
\newacronym{DoA}{DoA}{direction-of-arrival}
\newacronym{DD}{DD}{doubly-dispersive}
\newacronym{ODDM}{ODDM}{orthogonal delay-Doppler division multiplexing}
\newacronym{LoS}{LoS}{line-of-sight}
\newacronym{NLoS}{NLoS}{non-line-of-sight}
\newacronym{6G}{6G}{sixth generation}
\newacronym{MPDD}{MPDD}{metasurfaces-parameterized DD}
\newacronym{GaBP}{GaBP}{Gaussian Belief Propagation}
\newacronym{MSE}{MSE}{mean-squared-error}
\newacronym{sIC}{soft IC}{soft interference cancellation}
\newacronym{soft RG}{soft RG}{soft replica generation}
\newacronym{BG}{BG}{belief generation}
\newacronym{SGA}{SGA}{scalar Gaussian approximation}
\newacronym{CLT}{CLT}{central limit theorem}
\newacronym{PDF}{PDF}{probability density function}
\newacronym{QPSK}{QPSK}{quadrature phase-shift keying}
\newacronym{LMMSE}{LMMSE}{linear minimum mean square error}
\newacronym{SNR}{SNR}{signal-to-noise ratio}
\newacronym{QoS}{QoS}{quality of service}
\newacronym{CoV}{CoV}{calculus of variations}
\newacronym{CAPA}{CAPA}{continuous aperture array}
\newacronym{FCAPA}{FCAPA}{flexible continuous aperture array}
\newacronym{GL}{GL}{Gauss-Legendre}
\newacronym{DDC MIMO}{DDC MIMO}{DD continuous MIMO}
\newacronym{B5G}{B5G}{beyond fifth generation}
\newacronym{VR}{VR}{virtual reality}
\newacronym{XR}{XR}{extended reality}
\newacronym{ITN}{ITN}{intelligent traffic networks}
\newacronym{SAGIN}{SAGIN}{space-air-ground integrated network}
\newacronym{UAV}{UAV}{unmanned aerial vehicle}
\newacronym{MUSIC}{MUSIC}{Multiple Signal Classification}
\newacronym{ICC}{ICC}{integrated communication and computing}
\newacronym{SINR}{SINR}{signal-to-interference-plus-noise ratio}
\newacronym{WSR}{WSR}{weighted sum rate}
\newacronym{ARPU}{ARPU}{average rate per user}
\newacronym{BCD}{BCD}{block coordinate descent}
\newacronym{PDE}{PDE}{partial differential equation}
\newacronym{EL}{EL}{Euler-Lagrange}
\newacronym{TCA}{TCA}{tightly coupled array}
\newacronym{ELAA}{ELAA}{extremely large-aperture arrays}
\newacronym{LIS}{LIS}{large intelligent surface}
\newacronym{CSI}{CSI}{channel state information}
\newacronym{RIS}{RIS}{reconfigurable intelligent surface}
\newacronym{KKT}{KKT}{Karush-Kuhn-Tucker}
\newacronym{MoM}{MoM}{method of moments}
\newacronym{SVD}{SVD}{singular value decomposition}
\newacronym{RWG}{RWG}{Rao-Wilton-Glisson}
\newacronym{EFIE}{EFIE}{electric field integral equation}
\newacronym{rms}{rms}{root mean square}
\newacronym{EMF}{EMF}{electromagnetic field}
\newacronym{XL-MIMO}{XL-MIMO}{extremely large-scale MIMO}
\newacronym{PD}{PD}{power density}
\newacronym{HS}{HS}{Hilbert-Schmidt}

\begin{document}


\title{Characterization of Continuous Electromagnetic Manifolds via Calculus of Variations}

\author{Kuranage Roche Rayan Ranasinghe\textsuperscript{\orcidlink{0000-0002-6834-8877}},~\IEEEmembership{Graduate Student Member,~IEEE,} \\
Miguel Rodrigo Castellanos\textsuperscript{\orcidlink{0000-0003-1627-1123}},~\IEEEmembership{Member,~IEEE,}
and Giuseppe Thadeu Freitas de Abreu\textsuperscript{\orcidlink{0000-0002-5018-8174}},~\IEEEmembership{Senior Member,~IEEE}
\thanks{K. R. R. Ranasinghe and G. T. F. de Abreu are with the School of Computer Science and Engineering, Constructor University (previously Jacobs University Bremen), Campus Ring 1, 28759 Bremen, Germany (emails: \{kranasinghe, gabreu\}@constructor.university).}
\thanks{Miguel Rodrigo Castellanos is with the Department of Electrical Engineering
and Computer Science, University of Tennessee, Knoxville, TN 37966 USA (email: mrcastelanos@utk.edu).}
\thanks{Parts of this work have been accepted for presentation at the 2026 Asilomar Conference on Signals, Systems, and Computers (IEEE ASILOMAR)~\cite{ranasinghe2026novel}.}
\vspace{-3ex}}



\maketitle

\begin{abstract}
We present a novel \ac{CoV}-based framework for the characterizing of, and beamforming over, continuous electromagnetic manifolds of arbitrary \ac{MIMO} array geometries.
The \ac{EM} manifold, \textit{i.e.}, the set of all physically realizable radiated field vectors parameterized by the array excitation, encodes the full spatial structure of an antenna system, including near-field phase curvature, polarization, and mutual coupling.
Building upon the discrete moment-matrix formulation of the \ac{SotA}, the proposed framework simultaneously overcomes three of its fundamental limitations: \textit{(i)}~the point-source approximation error incurred by the near-field radiation operator; \textit{(ii)}~the confinement of the beamforming space to the $N$-dimensional subspace dictated by the hardware port count; and \textit{(iii)}~the generalization to arbitrary array geometries.
To this end, each mesh element is modeled as a two-dimensional planar patch whose spatially averaged Green's function is evaluated via \ac{GL} quadrature, yielding a strictly more accurate near-field representation at negligible additional cost, while a continuous feeding function $w(\mathbf{p})\!\in\! L^2(\mathcal{S}_\mathrm{T})$, introduced as the infinite-dimensional limit of the $N$-port network, lifts the optimization onto a hardware-decoupled current subspace of dimension $K\!\gg\! N$.
As an application example, we employ the proposed \ac{CoV}-based framework to derive closed-form optimal beamformers for both unconstrained field-strength maximization, and a near-field pattern synthesis under a \ac{PD} and region constraints, establishing their exact analogy to the discrete and generalized matched filters.
Full-wave MATLAB Antenna Toolbox validation confirms consistent near-field accuracy gains over the \ac{SotA} baseline for both linear and planar geometries at comparable computational cost, while a spectral analysis of the steering operator quantifies the additional spatial \acp{DoF} unlocked by the continuous model, which the region-constrained beamformer is shown to exploit for markedly sharper spatial suppression.
\end{abstract}

\begin{IEEEkeywords}
MIMO arrays, electromagnetic manifold, calculus of variations, continuous aperture arrays, near-field beamforming, power density, degrees-of-freedom.
\end{IEEEkeywords}

\glsresetall

\vspace{-3ex}
\section{Introduction}

\IEEEPARstart{M}{ulti-antenna} technologies have been central to the evolution of wireless systems, exploiting an ever-growing number of spatial \acp{DoF}~\cite{PoonTIT2005}.
For the past two decades, the dominant modeling paradigm, has been far-field models based on plane-wave assumptions. This has been adequate because array apertures were small relative to communication distances, mutual coupling was manageable, and beamforming objectives were confined to azimuth/elevation steering.
Three converging trends now collectively invalidate these assumptions and necessitate an electromagnetically-consistent treatment.
\emph{First}, the transition to \acf{XL-MIMO} and \acf{CAPA}~\cite{WangTWC2025-CAPAOptimal,zhang2023pattern}, part of a broader movement toward holographic \ac{MIMO} and large intelligent surfaces that treat the aperture as a quasi-continuous radiating structure~\cite{Huang_WC_2020,Hu_TSP_2018,Gong_COMST_2024}, dramatically expands the Rayleigh distance $d_R = 2D^2/\lambda$, where $D$ is the aperture diameter.
For example, with a $1\,\mathrm{m}$ aperture at $5\,\mathrm{GHz}$, $d_R \approx 333\,\mathrm{m}$, placing typical users firmly in the near field, where spherical wavefront curvature, spatial non-stationarity, and polarization mixing can no longer be neglected~\cite{wang2025analytical}.
Comprehensive treatments of this regime~\cite{Lu_COMST_2024,Wang_WC_2024,Bjornson_OJCOMS_2020} establish that, once the aperture spans many wavelengths, far-field conditions such as uniform plane waves and distance-invariant array responses break down, so that both channel estimation~\cite{Cui_TWC_2023} and beamforming~\cite{Bjornson_Asilomar_2021} must be recast under a spherical-wave model.
\emph{Second}, sub-wavelength element spacing introduces significant mutual coupling that the scalar model ignores entirely; the embedded element currents obtained from a full-wave solver implicitly encode all coupling effects, and any model that bypasses this physics systematically misrepresents the available \acp{DoF}~\cite{Wallace2004}.
Coupling is not merely a parasitic effect to be de-embedded~\cite{Gupta_TAP_1983}.
A circuit-theoretic account of the antenna-channel interface~\cite{Ivrlac_TCAS_2010} shows that the currents a surface can actually support, and hence the superdirective and higher-order radiation modes it can excite~\cite{Marzetta_Asilomar_2019,GustafssonTAP2013}, are inseparable from the coupling, which is precisely the information that a manifold built on embedded currents preserves.
\emph{Third}, near-field sensing and \ac{ISAC} applications require precise localization of scatterers at distances where the radial components of the Green's function are non-negligible~\cite{elbir2024near}; errors in the radiation operator translate directly into localization bias.
Indeed, the same spherical-wave structure that complicates communication is what enables near-field \emph{beam focusing} at a point rather than merely in a direction~\cite{Zhang_TWC_2022_beamfocus,Nepa_APM_2017}, underpinning location-based services, wideband near-field beamforming~\cite{Cui_TWC_2024_wideband}, and high-resolution source localization~\cite{Friedlander_TSP_2019_nf}; in every case, an inaccurate near-field radiation operator maps directly into a bias in the estimated position or focal spot.
Together, these trends mark a qualitative shift in operating regime: the far-field scalar model is no longer a mild approximation, but a structural mismatch with the underlying physics, motivating the electromagnetically rigorous framework developed herein.

The foundational framework for \ac{EM} manifold characterization of discrete antenna arrays was established in~\cite{CastellanosTWC2025}, where the array response is parameterized via a moment matrix derived from a point-source (Dirac delta) approximation of each mesh segment, and beamforming is formulated over a finite $N$-port feed weight vector.
This line of work builds upon a rich history of manifold-based array characterization~\cite{YangTAP2019,FriedlanderTSP2020}, which captures near-field phase structure, polarization, and coupling that the classical plane-wave manifold discards.
At a more fundamental level, the number of independent field distributions an aperture can radiate is finite and dictated by its electrical size -- a principle formalized by electromagnetic information theory and by the classical results on the degrees of freedom of scattered and band-limited fields~\cite{Bucci_TAP_1989,Bucci_TAP_1998,Migliore_TAP_2008,Jensen_TAP_2008}. This viewpoint has re-emerged in the study of large intelligent surfaces and their fundamental limits~\cite{Dardari_JSAC_2020,Decarli_TWC_2021}, and it directly motivates the spectral (effective-rank) characterization of the steering operator developed in Section~\ref{sec:spectral}.
While powerful, this approach carries three limitations: \textit{(i)} the point-source radiation operator introduces approximation errors in the near-field, \textit{(ii)} the model is confined to $N$-dimensional current subspaces imposed by the hardware port count, and \textit{(iii)} the generalization to arbitrary array geometries is not made explicit.

To address these limitations simultaneously, it is essential to adopt a \emph{continuous} \ac{EM} framework.
Unlike discrete optimization models, which either face a prohibitive computational burden or resort to suboptimal approximations of continuous functional programs, a continuous \ac{CoV} approach fundamentally decouples the optimization space from the hardware architecture~\cite{WangTWC2025-CAPAMU,WangTWC2025-CAPAOptimal,WangTWC2026-CAPAMIMO}.
By modeling the mesh elements as realistic \ac{2D} planar patches evaluated via \ac{GL} quadrature, the radiation operator attains superior near-field accuracy without incurring prohibitive cost, while introducing a continuous feeding function as the infinite-dimensional limit of the $N$-port network lifts the beamforming-subspace restriction, maximizing the exploitable \acp{DoF} and seamlessly supporting the arbitrary planar topologies~\cite{YuanAWPL2022,JelinekTAP2017} essential to modern \ac{MIMO}.
Continuous and holographic aperture models have previously been approached through Fourier plane-wave and wavenumber-domain expansions~\cite{Pizzo_JSAC_2020,Pizzo_TWC_2022,Sanguinetti_TWC_2023,Wei_JSTSP_2022}, and realized in an approximate, discretized form by reconfigurable and holographic metasurfaces~\cite{DiRenzo_JSAC_2020,Wu_TWC_2019,Deng_TVT_2021}. The present \ac{CoV} treatment is complementary to these efforts: rather than fixing a basis or a hardware template a priori, it yields the exact functional optimum over $L^2(\mathcal{S}_\mathrm{T})$, thereby furnishing a hardware-agnostic performance bound against which such architectures can be benchmarked.

This paper addresses all three limitations within a unified \ac{CoV} framework.
The specific contributions are as follows.
\begin{itemize}
    \item \textbf{Patch-based radiation operator:} Each mesh element is modeled as a realistic \ac{2D} planar patch rather than a point source. 
    The spatially averaged Green's function over each patch is evaluated via tensor-product \ac{GL} quadrature, yielding a strictly more accurate near-field representation at negligible additional cost.
    \item \textbf{Continuous feeding framework:} We introduce a continuous feeding function $w(\mathbf{p}) \in L^2(\mathcal{S}_\mathrm{T})$ as the infinite-dimensional limit of the $N$-port network, enabling optimization over a $K$-dimensional current subspace ($K \gg N$) that is decoupled from hardware port constraints.
    \item \textbf{Region-constrained beamforming:} Leveraging the continuous framework, we derive a closed-form optimal beamformer for field-strength maximization subject to a power density constraint over an extended spatial exclusion region $\mathcal{P}_\mathrm{con}$, establishing its continuous analogue to the discrete 
    generalized matched filter.
    \item \textbf{Support for arbitrary \ac{2D} geometries:} Unlike~\cite{CastellanosTWC2025}, the proposed model supports arbitrary planar transmit surface geometries (and extends naturally to volumetric ones), validated for both linear and planar arrays.
\end{itemize}

\vspace{-1.5ex}
\subsection{Organization and Notation}
\vspace{-1ex}

\textit{Organization:} The rest of this paper is organized as follows.
Section~\ref{sec:system_model} develops the continuous \ac{EM} system model, progressing from a continuous radiating surface with a discrete $N$-port feed to its infinite-dimensional limit with a continuous feeding function, and establishes the relationship of the resulting operators to the classical \ac{MoM}/\ac{RWG} formulation.
Section~\ref{subsec:numerical_implementation} details the patch-based numerical implementation, the point-source baseline of the \ac{SotA}, the validation against full-wave simulations, and the spectral characterization of the steering operator.
Section~\ref{sec:bf_cov} formulates and solves, in closed form via the \ac{CoV}, the field-strength maximization and region-constrained \ac{PD} beamforming problems.
Finally, Section~\ref{sec:conclusion} concludes the paper, with the main proofs deferred to the appendices.

\textit{Notation:} All scalars are represented by upper or lowercase letters, while column vectors and matrices are denoted by bold lowercase and uppercase letters, respectively.
The diagonal matrix constructed from vector $\mathbf{a}$ is denoted by diag($\mathbf{a}$), while $\mathbf{A}\trans$, $\mathbf{A}\herm$, $\mathbf{A}^{1/2}$, and $\mathbf{A}(i,j)$ denote the transpose, Hermitian, square root and the $(i,j)$-th element of a matrix $\mathbf{A}$, respectively.
The convolution and Kronecker product are respectively denoted by $*$ and $\otimes$, while $\mathbf{I}_N$ and $\mathbf{F}_N$ represent the $N\times N$ identity and the normalized $N$-point \ac{DFT} matrices, respectively.
The sinc function is expressed as $\text{sinc}(a) \triangleq \frac{\sin(\pi a)}{\pi a}$, and $j\triangleq\sqrt{-1}$ denotes the elementary complex number.
The Dirac delta function is denoted by $\delta(\cdot)$.
The Lebesgue measure of a Euclidean subspace $\mathcal{S}$ is denoted by $|\mathcal{S}|$.
The absolute value and Euclidean norm are denoted by $|\cdot|$ and $||\cdot||$, respectively.

\vspace{-1ex}
\section{System Model}
\label{sec:system_model}
\subsection{Continuous Surface with Discrete Feeding Network}
\label{subsec:linear_model}
\vspace{-0.5ex}

Let $\mathbf{s} = [s_x, s_y, s_z]\trans \in \mathcal{S}_\mathrm{T}$ denote an arbitrary point (analogous to a single discrete Hertzian dipole) on a continuous transmit surface $\mathcal{S}_\mathrm{T}$, and similarly, $\mathbf{r} = [r_x, r_y, r_z]\trans \in \mathcal{S}_\mathrm{R}$ denote an arbitrary point on a continuous receive surface $\mathcal{S}_\mathrm{R}$.
Hereafter, it will be generally assumed that $\mathcal{S}_\mathrm{T}$ and $\mathcal{S}_\mathrm{R}$ are embedded in a homogeneous medium (e.g. free space).

Next, let $\bm{j}(\mathbf{s}) \in \mathbb{C}^{3\times 1}$ be an elementary current density at a point $\mathbf{s}$ on $\mathcal{S}_\mathrm{T}$\footnote{To be more explicit, $\bm{j}(\mathbf{s},\omega)$ denotes the Fourier transform of the current density at the point $\mathbf{s}$, where $\omega = 2\pi f / c = 2\pi / \lambda$ denotes the angular frequency, $f$ is the signal frequency, and $\lambda$ is the signal wavelength.
However, due to the narrowband single-carrier assumption, we omit the current density's explicit dependence on $\omega$, hereafter denoting it $\bm{j}(\mathbf{s})$.}, such that the electric field produced by a current density distribution over the transmit surface $\mathcal{S}_\mathrm{T}$, at point $\mathbf{r}$ on the receive surface can be described by \cite{PoonTIT2005}
\begin{equation}
\label{eq:E_integral}
\mathbf{e}(\mathbf{r}) = \int_{\mathcal{S}_\mathrm{T}} \mathbf{G}(\mathbf{r}, \mathbf{s}) \bm{j}(\mathbf{s}) \, {\rm d}\mathbf{s} \in \mathbb{C}^{3\times 1},
\end{equation}
where $\mathbf{G}(\mathbf{r}, \mathbf{s}) \in \mathbb{C}^{3\times 3}$ represents the Green's function \cite{YuanAWPL2022},
given by
\begin{equation}
\label{eq:Greens_function}
\mathbf{G}(\mathbf{r}, \mathbf{s}) = \left( \mathbf{I}_3 + \frac{\nabla \nabla}{\kappa^2} \right) \frac{e^{-j \kappa \|\mathbf{r} - \mathbf{s}\|}}{4\pi\|\mathbf{r} - \mathbf{s}\|},
\end{equation}
with $\kappa = \omega/c$ denoting the wavenumber and $\nabla$ denoting the vector differential operator in the three-dimensional Cartesian coordinate system.

Here, $\mathbf{I}_3 \in \mathbb{R}^{3\times 3}$ is the identity matrix and $\nabla\nabla$ is the dyadic (outer) gradient operator, whose $(i,j)$-th entry is $\partial^2/\partial r_i\,\partial r_j$ applied to the scalar Green's function $g(\mathbf{r},\mathbf{s}) = e^{-j\kappa\|\mathbf{r}-\mathbf{s}\|}/(4\pi\|\mathbf{r}-\mathbf{s}\|)$.
The operator $\mathbf{I}_3 + \nabla\nabla/\kappa^2$ thus encodes both the transverse and the longitudinal field components, the latter of which is negligible in the far field but dominant in the reactive near field.
Crucially, $\mathbf{G}(\mathbf{r},\mathbf{s})$ is smooth for $\mathbf{r} \neq \mathbf{s}$ but singular on the diagonal $\mathbf{r} = \mathbf{s}$, a property that motivates the spatial averaging of the radiation operator introduced in Section~\ref{subsec:numerical_implementation}.

\begin{remark}
    Equation \eqref{eq:E_integral} can be used to derive many optimization problems aimed at designing the current densities $\bm{j}(\mathbf{s})$ in order to achieve specific goals \cite{WangTWC2025-CAPAMU,WangTWC2025-CAPAOptimal,WangTWC2026-CAPAMIMO}.
\end{remark}

A practical mechanism to generate a specific current density distribution on the transmit surface $\mathcal{S}_\mathrm{T}$ is to excite it simultaneously via several feed points.
In particular, let $\mathbf{w} = [w_1,\dots,w_n,\dots,w_N]\trans \in \mathbb{C}^{N \times 1}$ denote the weights of an $N$-port feed into an arbitrary surface $\mathcal{S}_\mathrm{T}$.
Then, the current density induced at each point $\mathbf{s}$ on $\mathcal{S}_\mathrm{T}$ due to these feeds can be expressed as
\begin{equation}
    \label{eq:feed_induced_J}
    \bm{j}(\mathbf{s}; \mathbf{w}) = \sum_{n=1}^N w_n \bm{j}_n(\mathbf{s}) \in \mathbb{C}^{3\times 1},
\end{equation}
where each $\bm{j}_n(\mathbf{s})$ is an elementary current density due to the excitation of the $n$-th port\footnote{This is the continuous analog of the ``moment matrix'' in \cite{CastellanosTWC2025}.} with sinusoidal current of 1 Ampere rms, and we have slightly abused our notation to explicitly emphasize the dependence of the superimposed current density $\bm{j}(\mathbf{s}; \mathbf{w})$ on the weights $\mathbf{w}$.

Substituting equation \eqref{eq:feed_induced_J} into equation \eqref{eq:E_integral} readily yields
\begin{align}
\label{eq:E_integral_induced_J}
\mathbf{e}(\mathbf{r}, \mathbf{w}) &= \sum_{n=1}^N w_n \underbrace{ \int_{\mathcal{S}_\mathrm{T}} \mathbf{G}(\mathbf{r}, \mathbf{s})  \bm{j}_n(\mathbf{s}) \, {\rm d}\mathbf{s} }_{\triangleq \mathbf{a}_n(\mathbf{r}) \in \mathbb{C}^{3 \times 1}}  \in \mathbb{C}^{3\times 1} \nonumber \\[-2ex]
&= \sum_{n=1}^N w_n \mathbf{a}_n(\mathbf{r}) = \mathbf{A}(\mathbf{r}) \mathbf{w},
\end{align}
where we implicitly defined the continuous steering vectors $\mathbf{a}_n(\mathbf{r})$ and the corresponding continuous steering matrix $\mathbf{A}(\mathbf{r}) \triangleq [\mathbf{a}_1(\mathbf{r}), \dots, \mathbf{a}_n(\mathbf{r}), \dots, \mathbf{a}_N(\mathbf{r})] \in \mathbb{C}^{3 \times N}$.

Equation \eqref{eq:E_integral_induced_J} can now be recognized as analogous to  equation (24) of \cite{CastellanosTWC2025} under \emph{continuous} \ac{EM} currents and \emph{discrete} feeds. The continuous steering vectors are analogous to the embedded antenna patterns of the array elements, and the continuous steering matrix represents what is known as the analytic manifold in antenna theory \cite{FriedlanderTSP2020,YangTAP2019}.
To be clear, unlike \cite[Eq.(24)]{CastellanosTWC2025}, equation \eqref{eq:E_integral_induced_J} does not result from or imply the approximation of the continuous radiating structure onto a discrete equivalent with basis on Herzian dipoles.
Instead only excitation feeds needed to generate $\bm{j}(\mathbf{s}; \mathbf{w})$ are discrete, or better, independent, which in turn can be exploited to formulate and solve optimization problems over the sets $\mathbf{w}\in\mathbb{C}^{N\times 1}$.
We further emphasize, that as a consequence of the continuous analysis described above, the model is not reliant on the accuracy of a discrete approximation as in \cite{CastellanosTWC2025}.

\subsection{Relationship to the MoM and RWG Basis Functions}
\label{subsec:mom_rwg}

The \acf{MoM} with \acf{RWG} basis functions~\cite{Makarov2001} is the canonical numerical technique for solving the \acf{EFIE} for the unknown surface currents on a radiating structure.
In the \ac{MoM}/\ac{RWG} formulation, the surface current is expanded as $\bm{j}(\mathbf{s}) = \sum_{k} I_k \boldsymbol{\Lambda}_k(\mathbf{s})$, where the $\boldsymbol{\Lambda}_k$ are \ac{RWG} functions defined on triangular mesh-element pairs, and the coefficients $\{I_k\}$ are obtained by enforcing the boundary conditions via Galerkin testing.
The resulting impedance matrix encodes the mutual couplings through dyadic Green's-function integrals, including singular self-terms that demand specialized quadrature.

The framework developed here differs fundamentally in purpose.
Rather than solving for the unknown currents, our goal is to \emph{characterize the radiated-field manifold} for a prescribed set of port excitations, using the physical embedded currents $\bm{j}_n(\mathbf{s})$ determined by a full-wave solver as an input.
The radiation operator introduced in Section~\ref{subsec:numerical_implementation} then maps these currents to the radiated field at observation points $\mathbf{r} \notin \mathcal{S}_\mathrm{T}$, where the integrand is smooth and standard quadrature applies without any singularity treatment.
The contribution here is therefore not an alternative current solver, but a higher-fidelity \emph{post-processing} operator that replaces the zeroth-order centroid rule of~\cite{CastellanosTWC2025} with spatially averaged patch operators, improving near-field accuracy without revisiting the underlying \ac{MoM} solve.

A related body of work characterizes antenna arrays directly from their far-field patterns~\cite{PoonTIT2005}, or via scalar channel models that absorb mutual coupling into a fixed correction matrix~\cite{Wallace2004}.
While computationally convenient, these approaches conflate the physical current distribution with its far-field projection, discarding the near-field phase structure.
The Hertzian-dipole approximation of~\cite{CastellanosTWC2025} partially recovers this near-field information by retaining the full dyadic Green's function, but evaluates it at the mesh centroids only -- a zeroth-order rule whose error grows as the observation point approaches the surface, and which the present work systematically corrects.

\subsection{Continuous Surface with Continuous Feeding Function}
\label{subsec:cont_feeding_model}

The elegance of \eqref{eq:E_integral_induced_J} deserves some reflection.
In particular, the expression suggests that sophisticated radiation patterns can be designed over continuous surfaces based on the superposition of a discrete number of elementary current densities, in \emph{likeness to} (but certainly not exactly as) a Fourier series\footnote{We emphasize that, unlike Fourier Series, the elementary current densities $\bm{j}_n(\mathbf{s})$ are not harmonically related.}.
However, in order to obtain sufficient control over the current density distributions that can be designed over the transmit surface $\mathcal{S}_\mathrm{T}$, one or both of the following conditions must be met: $i$) the elementary current densities $\bm{j}_n(\mathbf{s})$ themselves must be able to exploit all the \acp{DoF} on the surface; and/or $ii$)
the number of ports $N$ need to be sufficiently large.

Condition $i$ is somewhat undesirable and can be set aside for the time being, since it translates to the requirement that a suitable basis of elementary current densities be designed, which can be potentially addressed in a follow up work.
For this reason, we focus hereafter on condition $ii$, compounded with the assumption that $\bm{j}_n(\mathbf{s})$ are sinusoidal components of the same frequency $\omega$.

This motivates us to consider the limiting case when $N\to\infty$, where the discrete feeding network discussed above tends to a continuous feeding function.
This motivation is also backed by \ac{SotA} on optimal designs for the construction of such feeds as done in \cite{GustafssonTAP2013,JelinekTAP2017}.
Then, let $w(\mathbf{p}) \in L^2(\mathcal{S}_\mathrm{T})$ denote a continuous feeding excitation function at a point $\mathbf{p}$ on the transmitting surface $\mathcal{S}_\mathrm{T}$.
Then, the induced current density can be modeled by the superposition
\begin{equation}
    \label{eq:feed_induced_J_cont}
    \bm{j}(\mathbf{s}) = \int_{\mathcal{S}_\mathrm{T}} w(\mathbf{p}) \bm{j}(\mathbf{s}, \mathbf{p}) \, {\rm d}\mathbf{p} \in \mathbb{C}^{3\times 1},
\end{equation}
where $\bm{j}(\mathbf{s}, \mathbf{p}) \in \mathbb{C}^{3\times 1}$ denotes the embedded element current response, i.e., the current density induced at surface location $\mathbf{s}$ when the aperture is excited by a 1 Ampere rms feed applied at location $\mathbf{p}$.

\begin{remark}[Convergence of Discrete to Continuous Feeding]
The continuous integral in~\eqref{eq:feed_induced_J_cont} is the formal $L^2$ limit of the discrete superposition in~\eqref{eq:feed_induced_J}. 
Specifically, for a sequence of uniform port grids with spacing $\Delta p \to 0$ and weights $w_n = w(\mathbf{p}_n)\Delta p$, the Riemann sums converge in $L^2(\mathcal{S}_\mathrm{T})$ to the integral in~\eqref{eq:feed_induced_J_cont}, provided $w \in L^2(\mathcal{S}_\mathrm{T})$ and $\bm{j}(\mathbf{s},\mathbf{p})$ is square-integrable jointly in $(\mathbf{s},\mathbf{p})$.
\end{remark}

\begin{remark}[Continuous Feeding as a Theoretical Aperture Bound]
\label{rem:capa_bound}
The continuous feeding function $w(\mathbf{p}) \in L^2(\mathcal{S}_T)$ should be interpreted as a
\emph{theoretical upper bound} on the beamforming performance achievable from the aperture
$\mathcal{S}_T$, rather than as a directly realizable hardware architecture.
In the numerical evaluations of Section~\ref{subsec:PD_BF}, this bound is operationalized
by granting independent complex amplitude and phase control to each of the $K$ mesh segments
(see Section~\ref{para:num_cont_feeding}), yielding a $K$-dimensional control space.
Any physical $N$-port feed network ($N < K$) confines realizable current distributions to
an $N$-dimensional subspace thereof; the continuous model thus subsumes all $N$-port
realizations and provides a principled, hardware-agnostic upper bound.
The rate at which this bound is approached as the control dimension increases is governed by the spectral
decay of the steering operator $\mathcal{A}$, characterized in Section~\ref{sec:spectral},
and is illustrated numerically in Fig.~\ref{fig:dof_scaling}.
\end{remark}

\begin{remark}[Physical Realizability of the Continuous Feed]
\label{rem:feed_realizability}
The continuous feeding model~\eqref{eq:feed_induced_J_cont} is presented as a theoretical construct that defines an upper bound on the achievable beamforming performance (cf. Remark~\ref{rem:capa_bound}); its physical realization is, however, non-trivial.
In standard antenna modeling, each feed port imposes a localized boundary condition, typically a delta-gap voltage source or a coaxial probe, that uniquely determines the current distribution over the entire surface through the governing integral equation.
Under such a model, specifying $w(\mathbf{p})$ as a continuous function over $\mathcal{S}_\mathrm{T}$ would over-constrain the system: once the discrete port excitations are fixed, the surface current is already determined by the full-wave boundary-value problem, and $w(\mathbf{p})$ cannot be prescribed independently.
The transition from~\eqref{eq:feed_induced_J} to~\eqref{eq:feed_induced_J_cont} is therefore best understood as a mathematical limiting argument rather than a prescription for a directly realizable feed architecture.
The design of structured $N$-port networks whose aggregate response approximates a target $w(\mathbf{p})$, for instance, along the lines of the optimal-current syntheses in~\cite{GustafssonTAP2013,JelinekTAP2017}, is a promising direction for future work.
\end{remark}

Therefore, $\bm{j}(\mathbf{s}, \mathbf{p})$ in \eqref{eq:feed_induced_J_cont} is the continuously induced analogue of the discretely embedded current densities $\bm{j}_n(\mathbf{s})$ in \eqref{eq:feed_induced_J}, used in the discrete feeding network case\footnote{This interpretation aligns with the continuous moment-method framework used in array manifold characterization and is the natural infinite-dimensional extension of the ``moment matrix'' representation in \cite{CastellanosTWC2025}.}.

Then, substituting equation \eqref{eq:feed_induced_J_cont} into equation \eqref{eq:E_integral} yields
\begin{align}
\label{eq:E_integral_induced_J_cont}
\mathbf{e}(\mathbf{r}) &= \int_{\mathcal{S}_\mathrm{T}} w(\mathbf{p}) \underbrace{ \int_{\mathcal{S}_\mathrm{T}} \mathbf{G}(\mathbf{r}, \mathbf{s}) \bm{j}(\mathbf{s}, \mathbf{p}) \, {\rm d}\mathbf{s} }_{\triangleq \mathbf{a}(\mathbf{r}, \mathbf{p}) \in \mathbb{C}^{3 \times 1}} \, {\rm d}\mathbf{p} \in \mathbb{C}^{3\times 1} \nonumber \\
&= \int_{\mathcal{S}_\mathrm{T}} w(\mathbf{p}) \mathbf{a}(\mathbf{r}, \mathbf{p}) \, {\rm d}\mathbf{p},
\end{align}
where we implicitly defined the continuous steering vector $\mathbf{a}(\mathbf{r}, \mathbf{p})$.

\begin{remark}[Regularity of the Continuous Steering Vector]
\label{rem:regularity}
For observation points $\mathbf{r} \notin \mathcal{S}_\mathrm{T}$, the Green's function $\mathbf{G}(\mathbf{r},\mathbf{s})$ is smooth in $\mathbf{s}$, and $\mathbf{a}(\mathbf{r},\mathbf{p})$, as a function of $\mathbf{p}$, lies in $L^2(\mathcal{S}_\mathrm{T})$ componentwise, inheriting the square-integrability of $\bm{j}(\mathbf{s},\mathbf{p})$.
This regularity is essential for the well-posedness of the \ac{CoV} problems in Section \ref{sec:bf_cov}, and ensures that the \ac{GL} quadrature approximations of the integrals in Theorems~\ref{theorem_1} and~\ref{thm:PD_opt} converge exponentially fast in the quadrature order $N_q$.
\end{remark}

Equation \eqref{eq:E_integral_induced_J_cont} can then be identified as an equivalent of \cite[Eq. (24)]{CastellanosTWC2025} under \emph{continuous} \ac{EM} currents and \emph{continuous} feeding, which can be used to formulate and solve optimization problems.

This alternative representation gives us a way to apply the \ac{CoV} to formulate and solve optimization problems with respect to the \emph{feeds} as well as the current densities.
For example, formulating a field strength maximization problem using equation \eqref{eq:E_integral_induced_J_cont} would yield a low-complexity closed-form which can be approximated well via the \ac{GL} quadrature.

Since some optimization problems formulated hereafter requires the explicit model after receive polarization, let $\mathbf{u}_r \in \mathbb{R}^{3 \times 1}$ denote the polarization direction of the receiver.
Then, the effective electric field captured at the receiver can be expressed as
\begin{equation}
\label{eq:scalar_e_field}
    e(\mathbf{r}) = \mathbf{u}_r\herm \mathbf{e}(\mathbf{r}) = \int_{\mathcal{S}_\mathrm{T}} w(\mathbf{p}) a(\mathbf{r}, \mathbf{p}) \, {\rm d}\mathbf{p} \in \mathbb{C}.
\end{equation}
where we explicitly define the scalar steering vector to be $a(\mathbf{r},\mathbf{p}) \triangleq \mathbf{u}_r\trans \mathbf{a}(\mathbf{r}, \mathbf{p})$.

\section{Model Validation via Numerical Implementation \& Evaluation}
\label{subsec:numerical_implementation}

\subsection{Numerical Implementation}

The continuous field representation in \eqref{eq:E_integral_induced_J_cont} defines a linear operator mapping surface current densities to radiated fields. For numerical evaluation, we adopt a segment-based discretization
of the transmit surface, resulting in a finite-dimensional linear operator consistent with the implemented solver.

\subsubsection{Proposed Method}

\paragraph{Segment-Based Current Representation}

The transmit surface $\mathcal{S}_\mathrm{T}$ is partitioned into $K$
non-overlapping mesh elements with centroids
$\{\mathbf{s}_k \in \mathbb{R}^{3 \times 1}\}_{k=1}^K$ and areas $\{A_k\}_{k=1}^K$.
Each element is represented by a localized vector current moment.

The induced surface current density is approximated as a superposition of
short oriented dipole segments
\begin{equation}
\mathbf{j}(\mathbf{s};\mathbf{w}) \approx \sum_{k=1}^{K} \mathbf{m}_k(\mathbf{w})
\,\delta_\ell(\mathbf{s}-\mathbf{s}_k) \in \mathbb{C}^{3 \times 1},
\label{eq:segment_current_model_twc}
\end{equation}
where $\mathbf{m}_k(\mathbf{w}) \in \mathbb{C}^{3\times1}$ denotes the Cartesian current coefficient associated with element $k$, and $\delta_\ell$ represents a finite-length distribution supported along a local segment centered at $\mathbf{s}_k$.

Stacking the segment coefficients yields
\begin{equation}
\mathbf{m}(\mathbf{w})
=
[\mathbf{m}_1\trans,\dots,\mathbf{m}_K\trans]\trans
\in \mathbb{C}^{3K\times1}.
\end{equation}

\paragraph{Embedded Current Response Matrix}

Let the structure be excited by $N$ feed ports.
For each unit port excitation, the electromagnetic solver provides the
corresponding segment current vector.
Stacking these responses defines the embedded current matrix
\begin{equation}
\label{eq:M_disc}
\mathbf{M} \in \mathbb{C}^{3K\times N}.
\end{equation}

For arbitrary complex feed weights $\mathbf{w}\in\mathbb{C}^{N\times1}$, the induced segment currents are given by
\begin{equation}
\mathbf{m}(\mathbf{w}) = \mathbf{M}\mathbf{w}.
\label{eq:embedded_map_twc}
\end{equation}

\paragraph{Discretized Radiation Operator}

Rather than approximating the mesh elements as \ac{1D} point sources or wire segments, we model each element as a realistic \ac{2D} planar patch. 
We define an effective patch dimension 
\begin{equation}
L_k = \sqrt{A_k}.
\end{equation}

For each patch $k$, we can then define a local tangent plane spanned by two orthogonal unit vectors: $\hat{\mathbf{d}}_{k,1}$, representing the dominant embedded current flow (obtained from the mean embedded current direction across port excitations, as formalized in~\eqref{eq:normal_d}), and $\hat{\mathbf{d}}_{k,2}$, the orthogonal tangent vector on the surface.
Concretely, $\hat{\mathbf{d}}_{k,2}$ is the unit vector in the tangent plane of $\mathcal{S}_\mathrm{T}$ at $\mathbf{s}_k$ satisfying $\hat{\mathbf{d}}_{k,2} \perp \hat{\mathbf{d}}_{k,1}$ and $\hat{\mathbf{d}}_{k,2} \perp \hat{\mathbf{n}}_k$, where $\hat{\mathbf{n}}_k$ is the outward surface normal at $\mathbf{s}_k$; this construction adapts automatically to arbitrary linear and planar surface geometries, requiring no modification of the radiation operator.

Then, the averaged radiated field contribution of patch $k$ at an observation point $\mathbf{r} \notin \mathcal{S}_\mathrm{T}$ is obtained by integrating the Green's function over the local patch area. 
Applying a change of variables to map the local patch coordinates $(x,y) \in [-L_k/2, L_k/2]^2$ to the standard reference domain $(\xi,\eta) \in [-1,1]^2$, the integral is given by
\begin{equation}
\mathbf{K}_k(\mathbf{r})
\approx
\frac{1}{4} \int_{-1}^{1} \int_{-1}^{1}
\mathbf{G}\!\left(
\mathbf{r},
\mathbf{s}_k + \frac{L_k}{2} \big(\xi \hat{\mathbf{d}}_{k,1} + \eta \hat{\mathbf{d}}_{k,2}\big)
\right)
d\xi \, d\eta,
\label{eq:patch_integral_twc}
\end{equation}
where the factor of $1/4$ arises from the Jacobian of the spatial transformation normalized by the patch area $A_k = L_k^2$, yielding the spatially averaged field operator per unit current moment.

Since the integrand is smooth in the near- and far-field regions away from the source, the \ac{2D} integral is evaluated efficiently via a tensor-product Gauss-Legendre quadrature given by
\begin{align}
\label{eq:GL_quadrature_2D}
\mathbf{K}_k(\mathbf{r}) &\approx  \\
&\hspace{-5ex}\frac{1}{4} \sum_{q_1=1}^{N_q} \sum_{q_2=1}^{N_q} \omega_{q_1} \omega_{q_2} \, \mathbf{G}\left(\mathbf{r}, \mathbf{s}_k + \frac{L_k}{2} \big(\xi_{q_1} \hat{\mathbf{d}}_{k,1} + \eta_{q_2} \hat{\mathbf{d}}_{k,2}\big)\right), \nonumber
\end{align}
where $\{\xi_q, \omega_q\}_{q=1}^{N_q}$ are the standard \ac{1D} Gauss-Legendre nodes and weights on the interval $[-1,1]$.

In practice, because the observation point $\mathbf{r}$ is strictly separated from the source patch (avoiding the singularity of the Green's function), the integrand is highly smooth. 
Consequently, the Gauss–Legendre quadrature converges exceedingly fast. 
For standard sub-wavelength mesh elements (e.g., $A_k \ll \lambda^2$), empirical evaluations demonstrate that a low quadrature order of $N_q = 2$ (yielding only four evaluation points per patch) is sufficient to achieve high-fidelity convergence. 
This ensures that the proposed patch-based continuous formulation maintains a low computational complexity profile while providing a strictly superior physical representation compared to zeroth-order point-source approximations.






\paragraph{Matrix Formulation}

Define the radiation matrix
\begin{equation}
\mathbf{K}(\mathbf{r})
=
[\mathbf{K}_1(\mathbf{r}),\dots,\mathbf{K}_K(\mathbf{r})]
\in \mathbb{C}^{3\times3K},
\end{equation}
where each block
$\mathbf{K}_k(\mathbf{r}) \in \mathbb{C}^{3\times3}$
maps the local vector current $\mathbf{m}_k$ to its field contribution.

The total radiated field is then
\begin{equation}
\mathbf{e}(\mathbf{r})
=
\mathbf{K}(\mathbf{r})\mathbf{m}(\mathbf{w})
=
\mathbf{K}(\mathbf{r})\mathbf{M}\mathbf{w}.
\label{eq:final_matrix_field_twc}
\end{equation}

For a receiver with polarization vector $\mathbf{u}_r$,
the scalar received field is
\begin{equation}
e(\mathbf{r})
=
\mathbf{u}_r\trans
\mathbf{K}(\mathbf{r})
\mathbf{M}
\mathbf{w}.
\end{equation}

\paragraph{Numerical Representation of the Continuous Feeding Function} 
\label{para:num_cont_feeding}

The embedded current response matrix $\mathbf{M} \in \mathbb{C}^{3K \times N}$ defined in~\eqref{eq:M_disc} rigorously models the physical bottleneck of a discrete feeding network: regardless of aperture size, the current distribution is confined to an $N$-dimensional subspace. 
In the experiments of Section~\ref{subsec:numerical_implementation}, $N = 16$ hardware ports yield $K = 1{,}152$ mesh segments, so $K/N = 72$, providing a $72\times$ expansion in the \emph{control-space} dimension under the continuous model.
We stress that this is an expansion of the number of independently tunable excitations, not of the physically radiating degrees of freedom: the latter are bounded by the effective rank of the steering operator (Section~\ref{sec:spectral}), which is markedly smaller than $K$ but, crucially, still much larger than $N$.

To evaluate the theoretical performance of $w(\mathbf{p}) \in L^2(\mathcal{S}_\mathrm{T})$, we approximate the infinite-dimensional Hilbert space by granting independent amplitude and phase control over each of the $K$ mesh segments. 
The $k$-th segment excitation is constrained to act along the dominant current direction $\hat{\mathbf{d}}_{k,1}$, defined as the unit vector aligned with the mean real part of the embedded current 
$\mathbf{M}_{(k)} \triangleq \mathbf{M}(3k{-}2:3k,\,:)$ across all port excitations
\begin{equation}
\label{eq:normal_d}
    \hat{\mathbf{d}}_{k,1} = 
    \frac{\sum_{n=1}^{N} \mathrm{Re}\{\mathbf{M}_{(k)}\mathbf{e}_n\}}
         {\big\|\sum_{n=1}^{N} \mathrm{Re}\{\mathbf{M}_{(k)}\mathbf{e}_n\}\big\|},
\end{equation}
where $\mathbf{e}_n$ is the $n$-th standard basis vector.

Let $\mathbf{w}_c \in \mathbb{C}^{K \times 1}$ denote the discretized continuous feeding function evaluated at the segment centroids. 
We can then define the continuous spatial mapping matrix $\mathbf{M}_c \in \mathbb{R}^{3K \times K}$ as a sparse block matrix, where the $k$-th column maps the scalar excitation $w_{c,k} \in \mathbb{C}$ strictly to the local dominant embedded current direction $\hat{\mathbf{d}}_{k,1} \in \mathbb{R}^{3 \times 1}$ of the $k$-th segment. 

The modeling choice in \eqref{eq:normal_d} ensures that $\mathbf{M}_c$ is physically consistent with the EM structure of the aperture; heuristically, exciting a segment against its natural current direction would demand reactive power the surface cannot readily support. This intuition is formalized by the dominant-direction restriction~\eqref{eq:normal_d}, whose approximation error is negligible for the arrays considered (Section~\ref{sec:spectral}).
If the mean current vanishes (e.g., for symmetric geometries), $\hat{\mathbf{d}}_{k,1}$ is taken as the dominant left singular vector of $\mathrm{Re}\{\mathbf{M}_{(k)}\}$.

\begin{remark}[Per-Segment Polarization Degrees of Freedom]
\label{rem:pol_dof}
The restriction of each segment to the single direction $\hat{\mathbf{d}}_{k,1}$ in~\eqref{eq:normal_d} and one complex excitation per segment is exact only when the segment's embedded current is \emph{rank-one in direction} across port excitations, i.e., when the element radiates an essentially fixed polarization. This is made precise by the singular value decomposition of the segment block $\mathbf{M}_{(k)} = \mathbf{U}_k\boldsymbol{\Sigma}_k\mathbf{V}_k^{\mathsf{H}}$, whose left singular vectors $\mathbf{U}_k = [\mathbf{u}_{k,1},\mathbf{u}_{k,2},\mathbf{u}_{k,3}]$ are the \emph{excitation-independent} spatial current directions supported by the element and $\sigma_{k,1}\!\ge\!\sigma_{k,2}\!\ge\!\sigma_{k,3}\!\ge\!0$ their strengths. The single-direction model sets $\hat{\mathbf{d}}_{k,1}\!=\!\mathbf{u}_{k,1}$ and is tight when the ratio $\sigma_{k,2}/\sigma_{k,1}$, which is precisely the per-segment polarization leakage, vanishes. For the dipole and bowtie elements considered the ratio sits at the numerical floor, so the restriction is essentially exact here.

This is not universal, however. In dual-polarized, multi-mode, and \emph{polarization-reconfigurable} elements the dominant current direction is \emph{deliberately} made excitation-dependent, so that changing the feed rotates the radiated polarization~\cite{Christodoulou_ProcIEEE_2012,Rodrigo_TAP_2014}; there $\sigma_{k,2}/\sigma_{k,1}=\mathcal{O}(1)$ and a single fixed direction is inadequate. The framework accommodates such elements without modification by assigning segment $k$ its \emph{polarization rank}
\begin{equation}
d_k \triangleq \bigl|\{\, i : \sigma_{k,i}/\sigma_{k,1}\ge \epsilon_{\mathrm{p}} \,\}\bigr| \in \{1,2,3\},
\label{eq:pol_rank}
\end{equation}
for a small threshold $\epsilon_{\mathrm{p}}>0$, and replacing the $k$-th column of $\mathbf{M}_c$ by the $3\times d_k$ block $[\mathbf{u}_{k,1},\dots,\mathbf{u}_{k,d_k}]$ driven by $d_k$ independent complex weights. 

The control dimension then grows from $K$ to $\sum_{k}d_k$, and the present formulation~\eqref{eq:normal_d} is recovered as the special case $d_k\!\equiv\!1$. The vectors $\{\mathbf{u}_{k,i}\}$ are the local analogue of the structure's characteristic current modes~\cite{Cabedo-Fabres_APM_2007}, so $d_k$ is simply the number of independent polarizations the $k$-th element can excite.
\end{remark}

Consequently, the induced segment currents under the theoretical continuous feeding paradigm can be modeled as
\begin{equation}
    \mathbf{m}(\mathbf{w}_c) = \mathbf{M}_c \mathbf{w}_c \in \mathbb{C}^{3K \times 1}.
    \label{eq:m_cont}
\end{equation}

Substituting~\eqref{eq:m_cont} into~\eqref{eq:final_matrix_field_twc} yields the discretized received field
\begin{equation}
    e(\mathbf{r}) = \mathbf{u}_r\trans \mathbf{K}(\mathbf{r}) \mathbf{M}_c \mathbf{w}_c.
    \label{eq:e_cont}
\end{equation}

By utilizing $\mathbf{M}_c$ instead of $\mathbf{M}$, the proposed \ac{CoV} numerical formulation exploits $K$ spatial degrees of freedom. 
This allows the optimization framework to synthesize the highly complex, spatially continuous current distributions required to solve strict regional pattern synthesis problems that finite $N$-port arrays physically cannot satisfy.

\paragraph{Approximation Properties and Convergence}

The segment-based approximation introduces two sources of error: the patch quadrature error in $\mathbf{K}_k(\mathbf{r})$, and the finite-$K$ discretization error in representing $w(\mathbf{p})$. 
For the former, since the Green's function is smooth for $\mathbf{r} \notin \mathcal{S}_\mathrm{T}$, the \ac{GL} quadrature with $N_q = 2$ achieves relative errors below $10^{-3}$ for $A_k \leq \lambda^2/100$, as confirmed 
empirically in Fig.~\ref{fig:linear_arrays_all}. 
For the latter, as $K \to \infty$ with $A_k \to 0$, the piecewise-constant approximation of $w(\mathbf{p})$ converges in $L^2(\mathcal{S}_\mathrm{T})$ to any target feeding function, so the continuous performance bounds are approached as $K$ increases, and are observed numerically to be approached monotonically from below once $K$ is sufficiently large that the segment size falls below a fraction of a wavelength.
In practice, $K \gg N$ is sufficient; the choice $K = 1{,}152$, $N = 16$ used here provides results indistinguishable from higher-$K$ evaluations for the scenarios considered.

\subsubsection{Baseline method for Comparison} 
The method from \cite{CastellanosTWC2025} applies the same approximation as in \eqref{eq:segment_current_model_twc} but takes $\delta_\ell$ to be a Dirac delta function. In essence, each segment is approximated as a point source rather than a short dipole with finite length. The radiated field contribution is then computed as
\begin{equation}
\mathbf{K}_k(\mathbf{r})
= 
\mathbf{G}\!\left(
\mathbf{r},
\mathbf{s}_k \right),
\label{eq:segment_integral_twc}
\end{equation} 
and the field is obtained from \eqref{eq:final_matrix_field_twc}. 

We note that, in contrast to the results here, the validation simulations in \cite{CastellanosTWC2025} approximate $A_k$ as being the same for each segment to account for the scenario in which the mesh characteristics are unknown.

\subsection{Numerical Evaluation for the Electric Field}

We validate the proposed model by comparing the electric field approximations against electromagnetic simulations performed using the MATLAB Antenna Toolbox for both linear and planar arrays. 
Two array element configurations are considered: a half-wave dipole array and a bowtie triangular array, all tuned to $5$~GHz. 
The effective array moment matrix $\mathbf{M}$ is extracted from MATLAB by measuring the current distribution under single-element excitation, with the MoM segmentation applied automatically.

We compare the simulated field $e_\mathrm{sim}(\mathbf{r})$ generated using EHfields against the proposed model 
$e(\mathbf{r})$ from~\eqref{eq:e_cont} using the relative error
\begin{equation}
    \text{Relative error} = \frac{\|e_\mathrm{sim}(\mathbf{r}) - e(\mathbf{r})\|}{\|e_\mathrm{sim}(\mathbf{r})\|},
    \label{eq:rel_error}
\end{equation}
evaluated as a function of distance from the array. 

\begin{figure}[t!]
    \centering
    \begin{subfigure}{0.92\columnwidth}
        \centering
        \includegraphics[width=\columnwidth]{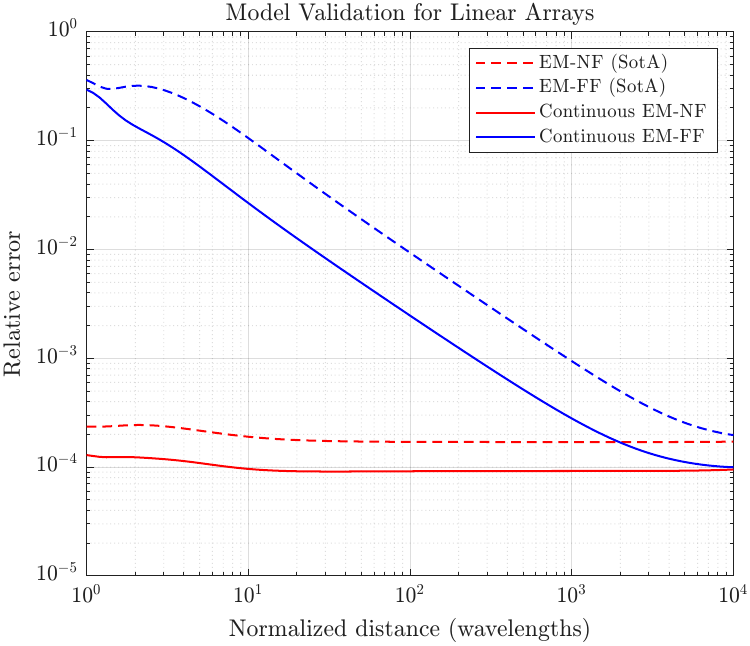}
        \vspace{-3ex}
        \caption{An 8 element Dipole array with $\lambda/2$ spacing.}
        \label{fig:linear_dipole_default}
    \end{subfigure}
    \\
    \begin{subfigure}{0.92\columnwidth}
        \centering
        \includegraphics[width=\columnwidth]{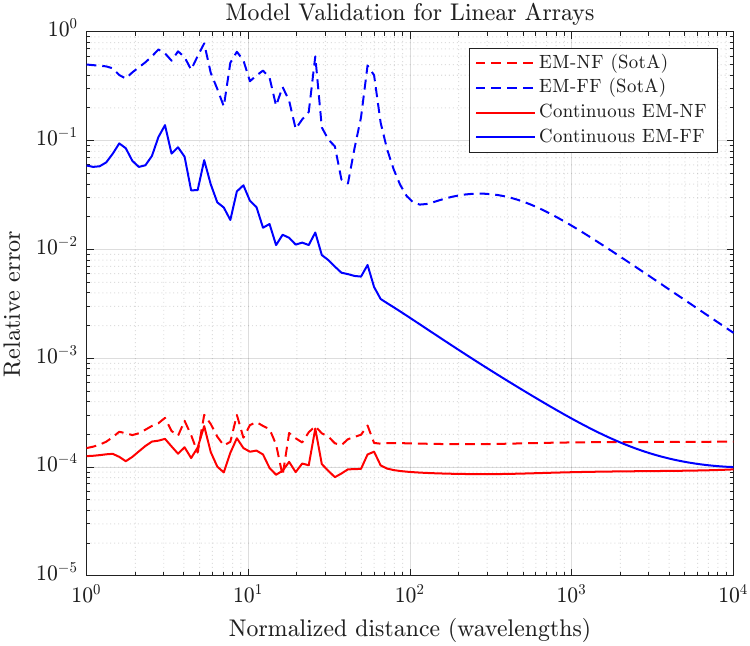}
        \vspace{-3ex}
        \caption{An 8 element Dipole array with $4\lambda$ spacing.}
        \label{fig:linear_diploe_lambda4}
    \end{subfigure}
    \\
    \begin{subfigure}{0.92\columnwidth}
        \centering
        \includegraphics[width=\columnwidth]{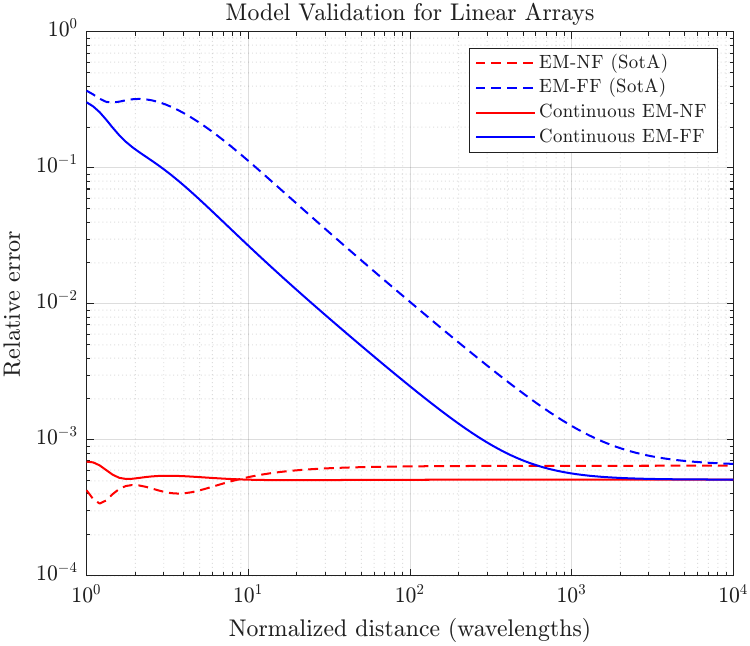}
        \vspace{-3ex}
        \caption{An 8 element bowtie triangular array with $\lambda/2$ spacing.}
        \label{fig:linear_bowtie}
    \end{subfigure}
    \caption{Model validation for linear arrays with an azimuth angle of 120$^\circ$ and an elevation angle of 30$^\circ$.}
    \label{fig:linear_arrays_all}
\end{figure}

\begin{figure}[t!]
    \centering
    \begin{subfigure}{0.92\columnwidth}
        \centering
        \includegraphics[width=\columnwidth]{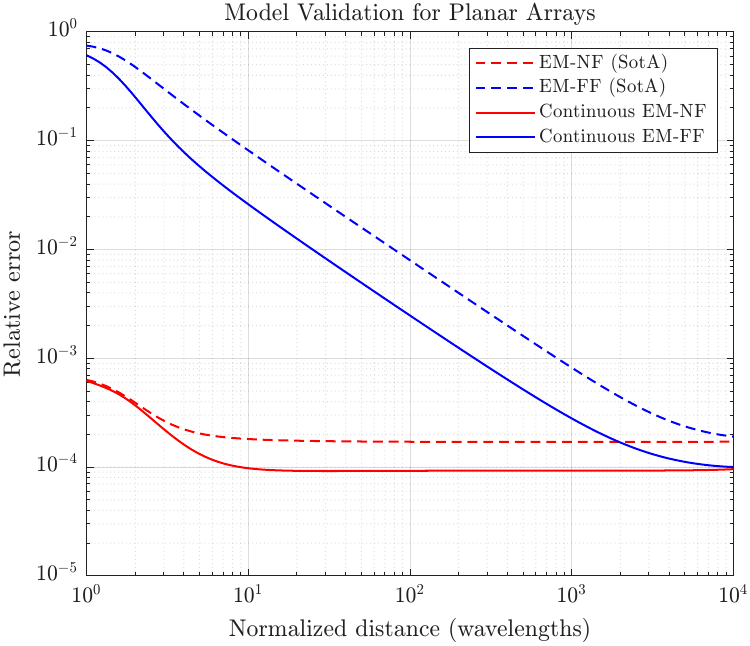}
        \vspace{-3ex}
        \caption{An 2x4 element Dipole array with $\lambda/2$ spacing.}
        \label{fig:planar_dipole_default}
    \end{subfigure}
    \\
    \begin{subfigure}{0.92\columnwidth}
        \centering
        \includegraphics[width=\columnwidth]{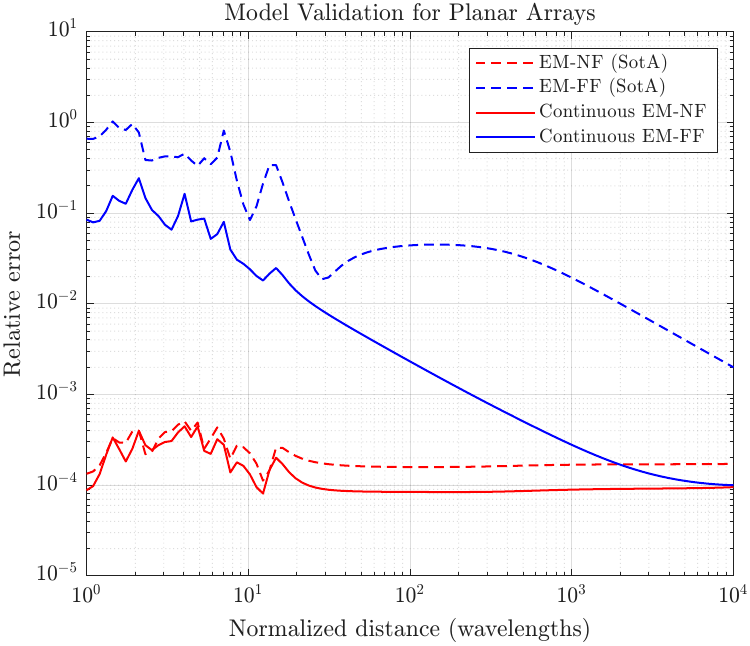}
        \vspace{-3ex}
        \caption{An 2x4 element Dipole array with $4\lambda$ spacing.}
        \label{fig:planar_diploe_lambda4}
    \end{subfigure}
    \\
    \begin{subfigure}{0.92\columnwidth}
        \centering
        \includegraphics[width=\columnwidth]{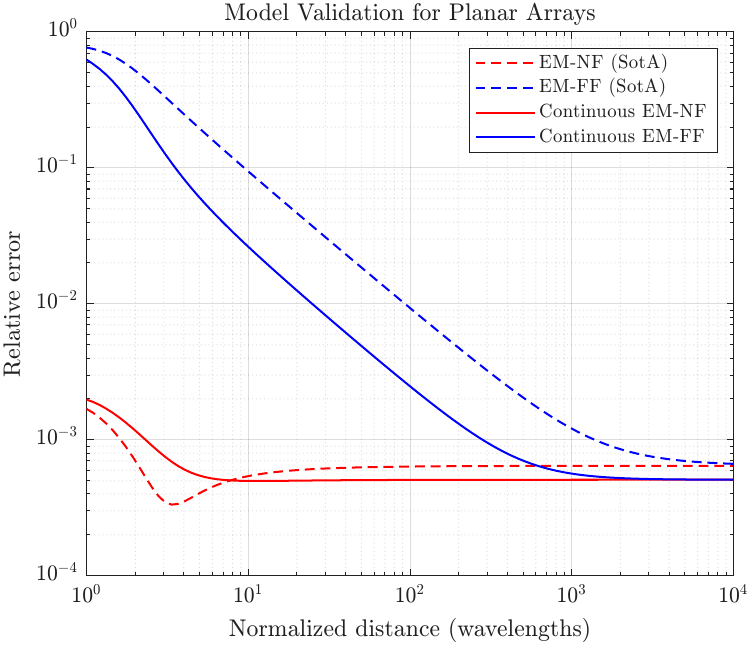}
        \vspace{-3ex}
        \caption{An 2x4 element bowtie triangular array with $\lambda/2$ spacing.}
        \label{fig:planar_bowtie}
    \end{subfigure}
    \caption{Model validation for planar arrays with an azimuth angle of 120$^\circ$ and an elevation angle of 30$^\circ$.}
    \label{fig:planar_arrays_all}
\end{figure}

As a baseline, we use the MoM method originally proposed in \cite{CastellanosTWC2025} and summarized in \eqref{eq:segment_integral_twc}.

Fig. \ref{fig:linear_arrays_all} shows that the proposed near-field model achieves low relative error across all distances for the all the array variations, with the far-field approximation converging to it as expected. 
While there are some larger deviations when the element spacing increases as seen from Fig. \ref{fig:linear_diploe_lambda4}, the model accuracies remain consistent throughout.
Next, Fig. \ref{fig:linear_bowtie} shows results with bowtie triangular elements, where the same gain in model accuracy is seen.

Next, Fig \ref{fig:planar_arrays_all} showcases the same set of results but now for a \ac{2D} planar array. 
As seen from the figure, the gains in relative error remain consistent throughout all the described scenarios.

To confirm that the near-field accuracy gain is a property of the operator rather than an artifact of the single observation direction used above, Fig.~\ref{fig:acc_angle} sweeps the observation azimuth over $[0^\circ,180^\circ]$ at a fixed elevation of $30^\circ$. The proposed patch operator attains a lower relative error than the point-source baseline across the entire angular range, at both $r=1.5\lambda$ and $r=5\lambda$. The advantage is thus uniform in angle, not specific to the $120^\circ$ cut of Figs.~\ref{fig:linear_arrays_all}--\ref{fig:planar_arrays_all}, and it persists with distance rather than vanishing, consistent with the distance sweeps therein.

Fig.~\ref{fig:quad_conv} justifies the choice $N_q=2$: measured against a high-order ($N_q=16$) self-reference, the tensor-product \ac{GL} quadrature converges super-algebraically, its relative error falling below $10^{-6}$ already at $N_q=2$ (four points per patch) and reaching machine precision by $N_q=4$, while the per-patch cost grows only as $N_q^2$. The operator therefore sits at the knee of this curve, securing quadrature-exact near-field fidelity at a fixed four-fold cost over the zeroth-order baseline, independent of array size.

\begin{figure}[t!]
    \centering
    \includegraphics[width=\columnwidth]{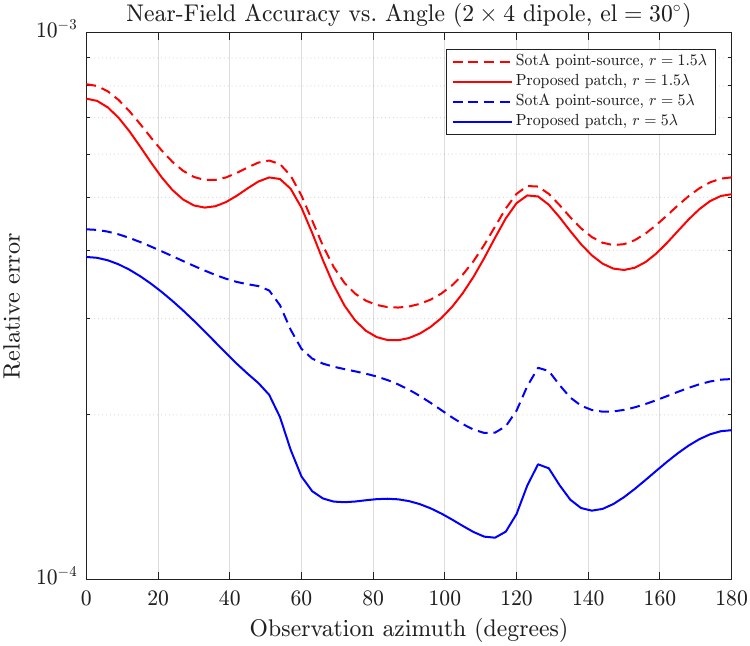}
    \caption{Relative field error versus observation azimuth (elevation $30^\circ$) for the $2\times4$ dipole array (the validated configuration of Fig.~\ref{fig:planar_arrays_all}). The proposed patch operator is uniformly more accurate than the point-source baseline across all angles, at both $r=1.5\lambda$ and $r=5\lambda$, with the advantage widening at the larger distance.}
    \vspace{-3ex}
    \label{fig:acc_angle}
\end{figure}

\begin{figure}[t!]
    \centering
    \includegraphics[width=\columnwidth]{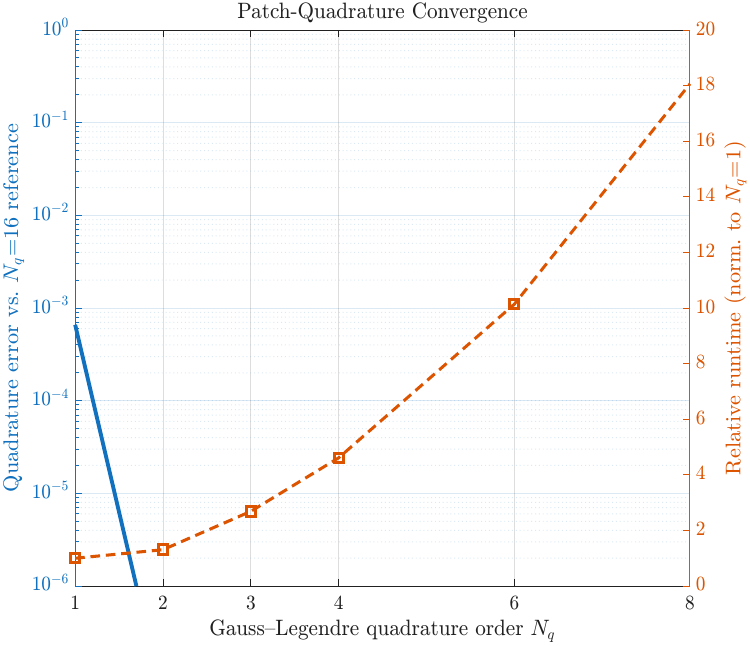}
    \caption{Patch-quadrature convergence against an $N_q=16$ self-reference (left axis) and relative per-patch runtime (right axis). A relative error below $10^{-6}$ is reached already at $N_q=2$, justifying that choice throughout.}\label{fig:quad_conv}
    \vspace{-2ex}
\end{figure}

\begin{figure}[t!]
    \centering
    \includegraphics[width=\columnwidth]{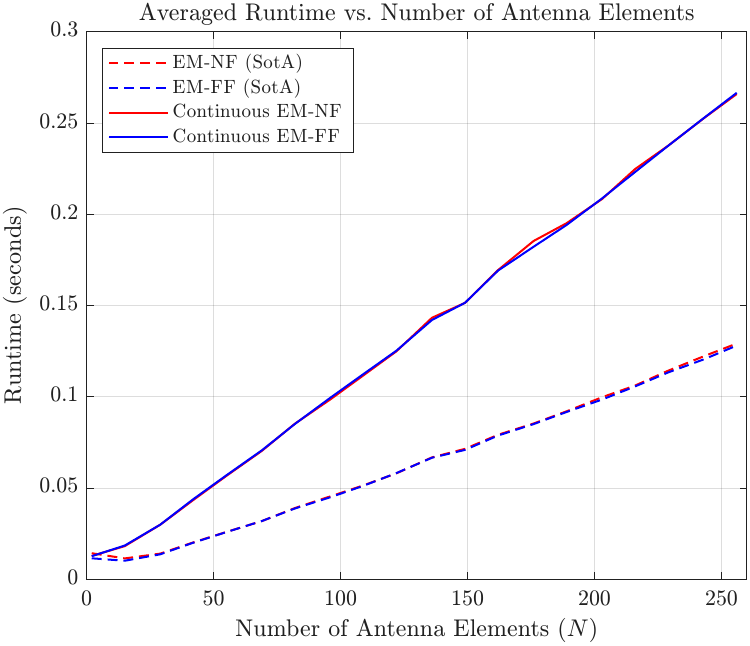}
    \caption{Average runtime comparison of the \ac{SotA} and proposed models.}
    \label{fig:runtimes}
    \vspace{-3ex}
\end{figure}

\begin{figure}[t!]
    \centering
    \includegraphics[width=\columnwidth]{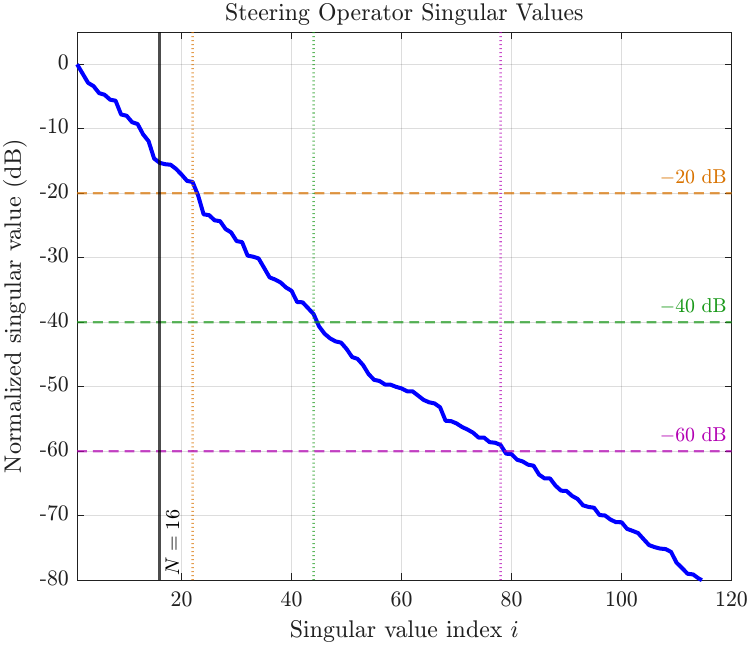}
    \caption{Normalized singular value spectrum of the discretized steering operator $\mathbf{H}_\Omega$ for the $4\times4$ bowtie array at $5\,\mathrm{GHz}$. The $-40\,\mathrm{dB}$ effective rank ($r_{0.01}\!=\!44$) greatly exceeds the $N=16$ hardware ports yet is below the dimension $K=1{,}152$.}
    \label{fig:sv_decay}
    \vspace{2ex}
    \centering
    \includegraphics[width=\columnwidth]{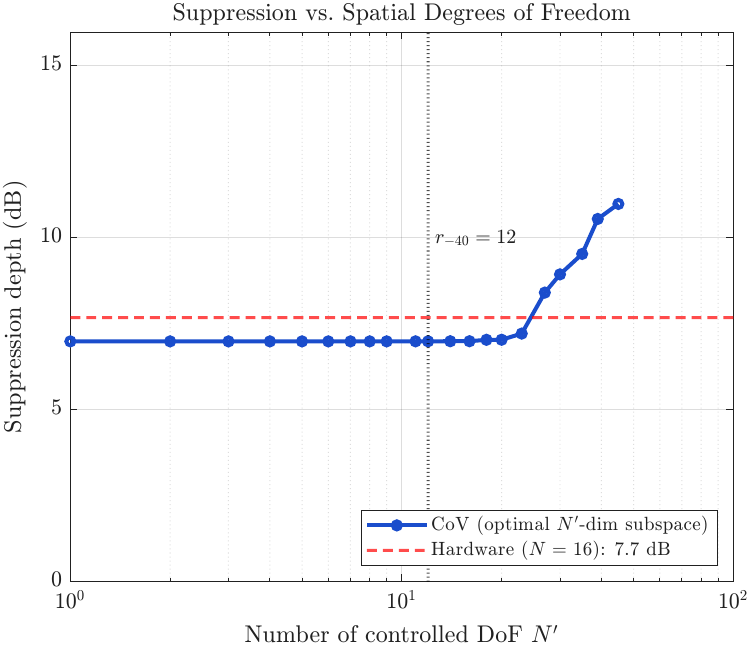}
    \caption{Suppression depth in the region-constrained problem of Section~\ref{subsec:PD_BF} versus the number of controlled degrees of freedom $N'$, using the leading right singular vectors of the suppression-region channel as the control basis. The $N=16$ hardware feed attains $7.7\,\mathrm{dB}$; the continuous model exploits higher-order modes to improve suppression by several dB. The marker $r_{-40}=12$ is the effective rank of the suppression-region channel.}\label{fig:dof_scaling}
    \vspace{-4ex}
\end{figure}

Finally, to ensure that the proposed method does not increase computational complexity, we showcase the average runtime in Fig. \ref{fig:runtimes}.

\subsection{Spectral Characterization and Effective Rank of the Steering Operator}
\label{sec:spectral}

The continuous steering operator $\mathcal{A}: L^2(\mathcal{S}_T) \to L^2(\mathcal{S}_R)$,
\begin{equation}
    [\mathcal{A}w](\mathbf{r}) \triangleq \int_{\mathcal{S}_T} w(\mathbf{p})\,a(\mathbf{r},\mathbf{p})\,\mathrm{d}\mathbf{p},
    \label{eq:steering_op}
\end{equation}
is a Hilbert--Schmidt operator under the regularity conditions of Remark~\ref{rem:regularity},
and therefore admits a countable singular value decomposition
$\mathcal{A} = \sum_{i=1}^\infty \sigma_i \langle \cdot, v_i \rangle_{L^2(\mathcal{S}_T)} u_i$,
with $\sigma_1 \geq \sigma_2 \geq \cdots \geq 0$~\cite{kress1999linear}.

The rate of singular value decay quantifies the effective spatial \ac{DoF} of the aperture.
Rapid decay implies a low-dimensional dominant subspace, whereas slow
decay indicates that a large number of independent feeding modes contribute to the radiated field over the observation domain.

To characterize this numerically, we evaluate the discretized steering matrix
\begin{equation}
    \mathbf{H}_\Omega \triangleq
    \begin{bmatrix} \mathbf{h}_c(\mathbf{r}_1) \\ \vdots \\ \mathbf{h}_c(\mathbf{r}_{N_\Omega}) \end{bmatrix}
    \in \mathbb{C}^{N_\Omega \times K},
    \label{eq:H_obs}
\end{equation}
where $\{\mathbf{r}_i\}_{i=1}^{N_\Omega}$ is a uniform angular grid over the observation sector (azimuth $0^\circ$ to $350^\circ$ over $36$ points and elevation $-40^\circ$ to $40^\circ$ over $9$ points, i.e., $N_\Omega = 324$) at distance $R_\mathrm{eval}$, and
$\mathbf{h}_c(\mathbf{r}) = \mathbf{u}_r\trans \mathbf{K}_\mathrm{cont}(\mathbf{r})\mathbf{M}_c
\in \mathbb{C}^{1 \times K}$ is the scalar channel vector in the $K$-dimensional control space.
The singular values $\{\sigma_i(\mathbf{H}_\Omega)\}_{i=1}^{\min(N_\Omega, K)}$ provide a
finite-dimensional approximation of the operator spectrum.
The \emph{effective numerical rank} is defined as
\begin{equation}
    r_\epsilon \triangleq \left|\left\{i : \sigma_i(\mathbf{H}_\Omega) / \sigma_1(\mathbf{H}_\Omega) \geq \epsilon \right\}\right|,
    \label{eq:eff_rank}
\end{equation}
for a threshold $\epsilon > 0$ (e.g., $\epsilon = 10^{-2}$, corresponding to a $-40\,\mathrm{dB}$ level).

Fig.~\ref{fig:sv_decay} shows the normalized singular value profile of $\mathbf{H}_\Omega$ for
the $4\times 4$ bowtie array at $5\,\mathrm{GHz}$.
The effective rank at the $-40\,\mathrm{dB}$ threshold is $r_{0.01} = 44$, confirming that the dominant radiating subspace of the aperture is substantially lower-dimensional than the $K = 1{,}152$-dimensional continuous control space, yet remains far larger than the $N = 16$ ports of the discrete feed. Consequently, a $16$-port network cannot span this dominant subspace: $r_{0.01} - N = 28$ independent spatial modes supported by the surface geometry are inaccessible to the discrete system. The practical value of these otherwise-unexploited modes is quantified in Fig.~\ref{fig:dof_scaling}, which reports the suppression depth attainable in the region-constrained problem of Section~\ref{subsec:PD_BF} versus the controlled subspace dimension $N'$. The $N=16$ hardware feed attains only $7.7\,\mathrm{dB}$; the continuous model matches this at small $N'$ but, by accessing the higher-order modes, improves the suppression by several decibels as $N'$ grows into the tens. This is a tangible beamforming gain, not merely a nominal increase in control dimension.

\section{Beamforming via Feed Optimization}
\label{sec:bf_cov}

\vspace{-1ex}
\subsection{Beamforming for Maximizing Field Strength}

Let us consider \eqref{eq:scalar_e_field} to formulate the optimization problem
\vspace{-1ex}
\begin{align}
\vspace{-1ex}
\underset{w(\mathbf{p})}{\text{maximize}} \quad &\bigg|\int_{\mathcal{S}_\mathrm{T}} w(\mathbf{p}) a(\mathbf{r}, \mathbf{p}) \, {\rm d}\mathbf{p} \bigg|^2 \label{eq:field_strength_max} \\
\text{s.t.} &\int_{\mathcal{S}_\mathrm{T}} \big| w(\mathbf{p}) \big|^2 \, {\rm d}\mathbf{p} \leq P. \label{eq:power_const}
\end{align}

\newpage

In order to obtain a solution to \eqref{eq:field_strength_max} based on the \ac{CoV}, let us first derive an equivalent power constraint for \eqref{eq:power_const} using the lemma defined below.

\begin{lemma}[\textit{Equivalent Power Constraint}]
\label{lemma:power_const}
The optimal solution to the problem in \eqref{eq:field_strength_max} satisfies the power constraint with equality; i.e.,
\begin{equation}
\label{eq:equiv_power_const}
\int_{\mathcal{S}_\mathrm{T}} \big| w(\mathbf{p}) \big|^2 \, {\rm d}\mathbf{p} = P.
\end{equation}

\begin{proof}
Let $\tilde{w}(\mathbf{p})$ denote a feasible solution to problem \eqref{eq:field_strength_max} that satisfies
\begin{equation}
\label{eq:proof_power_equi}
\tilde{P} \triangleq \int_{\mathcal{S}_\mathrm{T}} \big| \tilde{w}(\mathbf{p}) \big|^2 \, {\rm d}\mathbf{p} < P.
\end{equation}

Then, assuming the objective is continuous and strictly increasing in the power level, and by defining a scaling factor $\rho \triangleq P/\tilde{P}$ and a scaled solution $w(\mathbf{p}) \triangleq \sqrt{\rho}  \, \tilde{w}(\mathbf{p})$, it can be readily shown that the maximum objective in \eqref{eq:field_strength_max} achieved by the scaled solution $w(\mathbf{p})$ must be higher than that achieved by the solution $\tilde{w}(\mathbf{p})$ since $\rho > 1$.

Additionally, it can also be shown that
\begin{equation}
\label{eq:proof_power_equi_shown}
\int_{\mathcal{S}_\mathrm{T}} \big| w(\mathbf{p}) \big|^2 \, {\rm d}\mathbf{p} = \rho \int_{\mathcal{S}_\mathrm{T}} \big| \tilde{w}(\mathbf{p}) \big|^2 \, {\rm d}\mathbf{p} = \rho \tilde{P} = P.
\end{equation}

The results in \eqref{eq:proof_power_equi_shown} implies that for any feasible solution to \eqref{eq:field_strength_max}, there exists a solution that achieves a larger maximum objective with a corresponding power equality constraint. 
The proof is therefore complete.
\end{proof}

\end{lemma}

\begin{figure}[t!]
    \centering
    \includegraphics[width=\columnwidth]{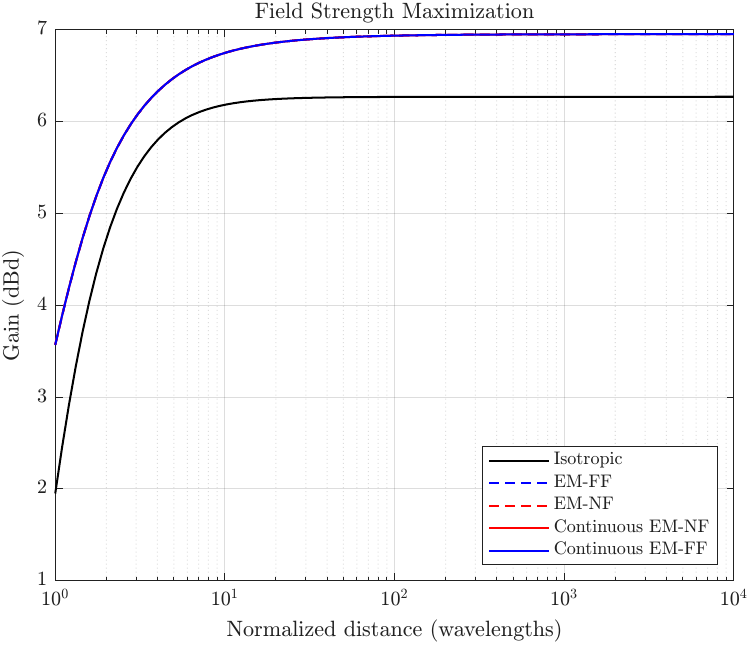}
    \caption{Field-strength-maximization gain (Theorem~\ref{theorem_1}) versus normalized distance for a linear dipole array. The \ac{SotA} (EM-NF/EM-FF) and proposed continuous (Continuous EM-NF/EM-FF) beamformers coincide and converge to the same far-field gain, well above the isotropic reference: single-point focusing is a rank-one problem for which $N=16$ ports already suffice, so the continuous model introduces no penalty.}\label{fig:gain_linear}
\end{figure}

The solution to the problem with the equality constraint is characterized by the following theorem.

\begin{theorem} \label{theorem_1}
    The optimal solution to the problem in equation \eqref{eq:field_strength_max} is given by
    \begin{equation} 
    \label{optimal_solution_SU}
        w^\star(\mathbf{p}) = \sqrt{\frac{P}{\int_{\mathcal{S}_\mathrm{T}} \big| a(\mathbf{r}, \mathbf{p}) \big|^2 \, {\rm d}\mathbf{p}}} \; a^*(\mathbf{r}, \mathbf{p}).
    \end{equation}

\end{theorem}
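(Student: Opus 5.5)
By Lemma~\ref{lemma:power_const}, it suffices to solve \eqref{eq:field_strength_max} with the equality constraint \eqref{eq:equiv_power_const}. The plan is to proceed in two complementary ways: a direct Cauchy--Schwarz argument in $L^2(\mathcal{S}_\mathrm{T})$ that certifies global optimality, and a \ac{CoV} (Lagrangian/Euler--Lagrange) derivation that exhibits the stationary point and mirrors the discrete matched filter. Throughout, $a(\mathbf{r},\cdot)\in L^2(\mathcal{S}_\mathrm{T})$ for the fixed observation point $\mathbf{r}\notin\mathcal{S}_\mathrm{T}$, by Remark~\ref{rem:regularity}. Hence $\|a\|^2 \triangleq \int_{\mathcal{S}_\mathrm{T}} |a(\mathbf{r},\mathbf{p})|^2\,{\rm d}\mathbf{p}$ is finite, and I assume it is nonzero, since otherwise every feasible $w$ is optimal.

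\textbf{Step 1 (upper bound).} For any feasible $w$, Cauchy--Schwarz gives
\begin{equation*}
\bigg|\int_{\mathcal{S}_\mathrm{T}} w(\mathbf{p}) a(\mathbf{r},\mathbf{p})\,{\rm d}\mathbf{p}\bigg|^2 \le \int_{\mathcal{S}_\mathrm{T}} |w(\mathbf{p})|^2{\rm d}\mathbf{p}\int_{\mathcal{S}_\mathrm{T}} |a(\mathbf{r},\mathbf{p})|^2{\rm d}\mathbf{p} \le P\,\|a\|^2 .
\end{equation*}
Equality holds if and only if $w(\mathbf{p}) = c\,a^*(\mathbf{r},\mathbf{p})$ almost everywhere, for some $c\in\mathbb{C}$, together with power equality.

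\textbf{Step 2 (\ac{CoV} derivation).} Form the functional
\begin{equation*}
\mathcal{L}[w,w^*] = \Big(\int w a\Big)\Big(\int w^* a^*\Big) - \mu\Big(\int |w|^2 - P\Big),
\end{equation*}
and take the first variation with respect to $w^*$, treating $w$ and $w^*$ as independent in the Wirtinger sense. Setting $\delta\mathcal{L}=0$ for arbitrary $\delta w^*$ yields the Euler--Lagrange condition $e(\mathbf{r})\,a^*(\mathbf{r},\mathbf{p}) = \mu\, w(\mathbf{p})$. Consequently $w \propto a^*(\mathbf{r},\cdot)$, with $\mu = |e(\mathbf{r})|^2/P$.

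\textbf{Step 3 (normalization).} Substituting $w = c\,a^*$ into \eqref{eq:equiv_power_const} gives $|c|^2\|a\|^2 = P$. Choosing $c$ real and positive yields \eqref{optimal_solution_SU}. The resulting objective is $P\|a\|^2$, which meets the bound from Step 1, so this stationary point is the global maximizer. The solution is unique up to a global phase $e^{j\phi}$, and the objective is insensitive to that phase.

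The main subtlety is not computational. It is justifying the functional derivative with respect to a complex-valued $L^2$ function, and arguing that the stationary point is a maximum rather than merely a critical point. I intend to sidestep both issues by relying on the Cauchy--Schwarz bound in Step 1 for rigor, and using Step 2 only to motivate the form of the solution. I will also remark that the result is the continuous analogue of the discrete matched filter $\mathbf{w}^\star \propto \mathbf{a}^*$.
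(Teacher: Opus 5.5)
Your proof is correct, and your Steps~2--3 essentially reproduce the paper's argument. The paper forms the same Lagrangian and sets the first variation to zero, obtaining $\mu\,w(\mathbf{p}) = a^*(\mathbf{r},\mathbf{p})\int_{\mathcal{S}_\mathrm{T}} w\,a$. It then rewrites this as a Fredholm eigen-equation with the separable rank-one kernel $a^*(\mathbf{r},\mathbf{p})\,a(\mathbf{r},\mathbf{z})$, concludes $w^\star\propto a^*(\mathbf{r},\cdot)$, and normalizes.

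The genuinely different ingredient is your Step~1, which the paper does not have. The paper only identifies a critical point. Its claim that the eigenspace is one-dimensional holds for the nonzero eigenvalue $\|a\|^2$, but it silently discards the $\mu=0$ branch. That branch's eigenspace is the whole orthogonal complement of $a^*$, i.e., every $w$ with $e(\mathbf{r})=0$, which are the minimizers. The paper also never checks that its critical point is a maximum, and it relies on Lemma~\ref{lemma:power_const}, whose scaling argument presumes the objective strictly increases with power; this fails when $\int\tilde w\,a=0$ or $\tilde P=0$.

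Your Cauchy--Schwarz bound acts directly on the inequality-constrained problem and certifies global optimality. It characterizes all maximizers as $c\,a^*$ with $|c|^2=P/\|a\|^2$, giving uniqueness up to a phase, which the paper does not state. It also makes both the lemma and the complex functional-derivative justification unnecessary. Incidentally, your identity $\mu=|e(\mathbf{r})|^2/P$ would by itself close the paper's gap: the objective at any stationary point equals $\mu P$ with $\mu\in\{0,\|a\|^2\}$.

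What the paper's variational route buys is a template that carries over verbatim to Theorem~\ref{thm:PD_opt}, where the kernel is no longer rank-one. Your approach extends there too, via Cauchy--Schwarz in the $\mathcal{X}_{\mathcal{P}_\mathrm{con}}$-weighted inner product: $|\langle a^*,w\rangle|^2\le\langle a^*,\mathcal{X}^{-1}_{\mathcal{P}_\mathrm{con}}a^*\rangle\,\langle w,\mathcal{X}_{\mathcal{P}_\mathrm{con}}w\rangle\le\Lambda Q$. This again gives a global rather than merely stationary certificate, subject to the invertibility caveats of Remark~\ref{rem:illposed}.
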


\begin{IEEEproof}
    Please refer to the Appendix \ref{theorem_1_proof}.
\end{IEEEproof}

\begin{remark}
The result in Theorem~\ref{theorem_1} is the infinite-dimensional analogue of the classical Rayleigh quotient maximization problem
$\max_{\mathbf{w}} \frac{|\mathbf{a}^H\mathbf{w}|^2}{\|\mathbf{w}\|_2^2}$,
whose solution is given by matched filtering \cite{CastellanosTWC2025}.
Here, the finite-dimensional steering vector is replaced by the continuous steering function $a(\mathbf{r},\mathbf{p})$, and the Euclidean norm is replaced by the $L^2(\mathcal{S}_\mathrm{T})$ norm.
\end{remark}

In addition, this solution already takes into account polarization of the potential receiver via the channel model in \eqref{eq:scalar_e_field}, analogous to Section IV.B in \cite{CastellanosTWC2025}.

Fig.~\ref{fig:gain_linear} presents the gain achieved by the field-strength maximization beamformer of Theorem~\ref{theorem_1} as a function of distance, for a linear dipole array. 
As expected, all models converge to the same far-field gain, since single-point focusing is a rank-one problem for which $N = 16$ ports already provide sufficient degrees of freedom. 
This result therefore serves as a \emph{sanity check}: it confirms that the proposed continuous model does not degrade performance in the regime where the \ac{SotA} is already optimal, while the improvements from 
the continuous framework emerge precisely in the overconstrained regional synthesis problem of Section~\ref{subsec:PD_BF}, where $N = 16$ ports are fundamentally insufficient.

\subsection{Generalized Near-Field Pattern Synthesis via Spatial Region Constraints}
\label{subsec:PD_BF}

Modern near-field applications, such as multi-user \ac{MIMO}, physical-layer security, and electromagnetic exposure compliance,  require precise control over radiated fields across extended spatial regions, not merely at isolated points. 
Discrete array models are ill-suited for this task, as their finite degrees of freedom often produce severe 
spatial rippling and leakage when suppressing extended areas. 
To address this, we formulate a region-constrained near-field beamforming problem within the proposed 
continuous \ac{CoV} framework, using a plane-wave power density (PD) constraint as also seen in \cite{CastellanosTWC2025}.

The plane-wave PD at a point $\mathbf{r}$ is given by~\cite{CastellanosTWC2025}
\begin{equation}
    \mathrm{PD}(\mathbf{r}) = \frac{\|\mathbf{e}(\mathbf{r})\|^2}{2\eta_0},
\end{equation}
where $\eta_0$ is the free-space impedance. 

Let $\mathcal{P}_\mathrm{con}$ denote a continuous spatial exclusion region -- representing, for instance, an EMF safety volume or an interference-suppression sector. 
Then, the spatially averaged PD over $\mathcal{P}_\mathrm{con}$ can be expressed as
\begin{align}
    \mathrm{PD}(\mathcal{P}_\mathrm{con})
    &= \frac{1}{2\eta_0 |\mathcal{P}_\mathrm{con}|}
       \int_{\mathcal{P}_\mathrm{con}}
       \bigg\|\int_{\mathcal{S}_\mathrm{T}} w(\mathbf{p})\,\mathbf{a}(\mathbf{r},\mathbf{p})
       \,\mathrm{d}\mathbf{p}\bigg\|^2 \mathrm{d}\mathbf{r} \nonumber\\
    &= \int_{\mathcal{S}_\mathrm{T}}\!\int_{\mathcal{S}_\mathrm{T}}
       w^*(\mathbf{p})\,X_{\mathcal{P}_\mathrm{con}}(\mathbf{p},\mathbf{p}')\,w(\mathbf{p}')
       \,\mathrm{d}\mathbf{p}\,\mathrm{d}\mathbf{p}',
\end{align}
where the spatial suppression kernel is defined as
\begin{equation}
    X_{\mathcal{P}_\mathrm{con}}(\mathbf{p},\mathbf{p}')
    \triangleq
    \frac{1}{2\eta_0|\mathcal{P}_\mathrm{con}|}
    \int_{\mathcal{P}_\mathrm{con}}
    \mathbf{a}^{\mathrm{H}}(\mathbf{r},\mathbf{p})\,\mathbf{a}(\mathbf{r},\mathbf{p}')
    \,\mathrm{d}\mathbf{r}.
\end{equation}

The objective is to maximize the field strength at a target location $\mathbf{r}_u$, subject to a constraint on the average PD over $\mathcal{P}_\mathrm{con}$ given by
\begin{align}
    \underset{w(\mathbf{p})}{\mathrm{maximize}} \quad
    &\bigg|\int_{\mathcal{S}_\mathrm{T}} w(\mathbf{p})\,a(\mathbf{r}_u,\mathbf{p})
    \,\mathrm{d}\mathbf{p}\bigg|^2
    \label{eq:field_strength_max_PD}\\
    \mathrm{s.t.} \quad
    &\int_{\mathcal{S}_\mathrm{T}}\!\int_{\mathcal{S}_\mathrm{T}}
    w^*(\mathbf{p})\,X_{\mathcal{P}_\mathrm{con}}(\mathbf{p},\mathbf{p}')\,w(\mathbf{p}')
    \,\mathrm{d}\mathbf{p}\,\mathrm{d}\mathbf{p}' \leq Q,
    \label{eq:PD_const}
\end{align}
where $Q$ is the maximum permissible average PD within $\mathcal{P}_\mathrm{con}$.

\begin{figure}[t!]
    \centering
    \includegraphics[width=\columnwidth]{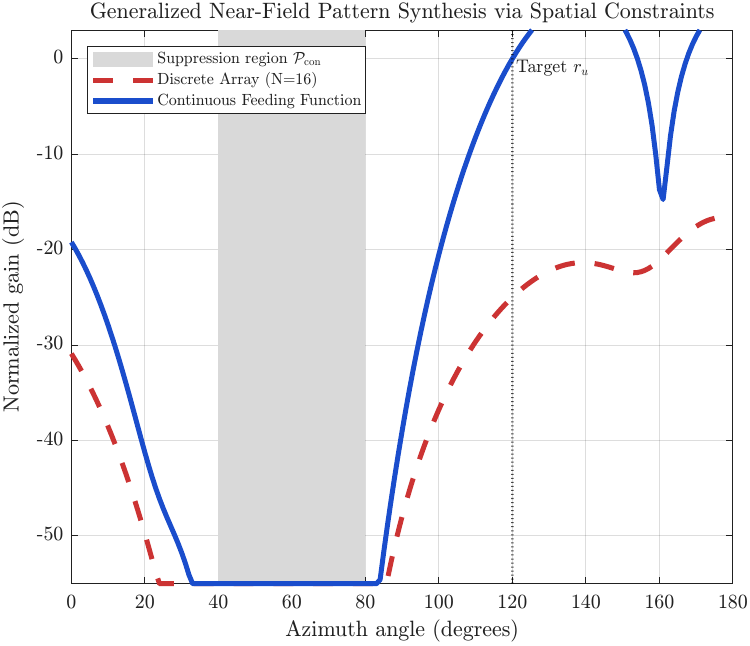}
    \caption{PD-constrained beamforming. Normalized radiation patterns for the discrete ($N=16$) and continuous ($K=1{,}152$) beamformers, both subject to average PD budget $Q = 0.2\,\mathrm{PD}_\mathrm{MF}$ over the suppression region $\mathcal{P}_\mathrm{con}$ (shaded). Both are evaluated on the physical continuous channel model. Target direction: $120^\circ$ azimuth.}
    \label{fig:PD-bF}
\end{figure}

\begin{lemma}[\textit{Active PD Constraint}]\label{lem:active_PD}
The optimal solution to~\eqref{eq:field_strength_max_PD} satisfies the PD 
constraint~\eqref{eq:PD_const} with equality.
\begin{proof}
Since both the objective and the PD functional are degree-two homogeneous in $w$, any 
feasible $\tilde{w}$ satisfying~\eqref{eq:PD_const} with strict inequality can be scaled 
by $\rho = \sqrt{Q \big/ \int\!\int \tilde{w}^* X_{\mathcal{P}_\mathrm{con}} \tilde{w}} > 1$, 
yielding a strictly larger objective that still satisfies~\eqref{eq:PD_const} with equality. 
Hence the constraint is active at the optimum.
\end{proof}
\end{lemma}

\begin{theorem}\label{thm:PD_opt}
Assume $X_{\mathcal{P}_\mathrm{con}}(\mathbf{p},\mathbf{p}')$ induces a strictly positive 
definite integral operator $\mathcal{X}_{\mathcal{P}_\mathrm{con}}$ on 
$L^2(\mathcal{S}_\mathrm{T})$. The optimal solution to~\eqref{eq:field_strength_max_PD} is
\begin{equation}\label{eq:w_opt_PD}
    w^\star(\mathbf{p}) = \sqrt{\frac{Q}{\Lambda}}\,
    \bigl[\mathcal{X}^{-1}_{\mathcal{P}_\mathrm{con}}\,a^*(\mathbf{r}_u,\cdot)\bigr](\mathbf{p}),
\end{equation}
where $[\mathcal{X}^{-1}_{\mathcal{P}_\mathrm{con}} f](\mathbf{p})
\triangleq \int_{\mathcal{S}_\mathrm{T}} X^{-1}_{\mathcal{P}_\mathrm{con}}(\mathbf{p},\mathbf{p}')\,
f(\mathbf{p}')\,\mathrm{d}\mathbf{p}'$ denotes the inverse operator applied to $f$, and
\begin{equation}\label{eq:Lambda_def}
    \Lambda \triangleq
    \int_{\mathcal{S}_\mathrm{T}}\!\int_{\mathcal{S}_\mathrm{T}}
    a(\mathbf{r}_u,\mathbf{p})\,X^{-1}_{\mathcal{P}_\mathrm{con}}(\mathbf{p},\mathbf{p}')\,
    a^*(\mathbf{r}_u,\mathbf{p}')\,\mathrm{d}\mathbf{p}\,\mathrm{d}\mathbf{p}'.
\end{equation}
\end{theorem}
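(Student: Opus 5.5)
By Lemma~\ref{lem:active_PD}, the PD constraint~\eqref{eq:PD_const} may be replaced by the equality $\langle w,\mathcal{X}_{\mathcal{P}_\mathrm{con}}w\rangle = Q$. Here $\langle f,g\rangle \triangleq \int_{\mathcal{S}_\mathrm{T}} f^*(\mathbf{p})g(\mathbf{p})\,\mathrm{d}\mathbf{p}$ is the inner product on $L^2(\mathcal{S}_\mathrm{T})$. The objective can be written as $|\langle h,w\rangle|^2$ with $h(\mathbf{p})\triangleq a^*(\mathbf{r}_u,\mathbf{p})$. The kernel satisfies $X_{\mathcal{P}_\mathrm{con}}(\mathbf{p},\mathbf{p}')^* = X_{\mathcal{P}_\mathrm{con}}(\mathbf{p}',\mathbf{p})$, so $\mathcal{X}_{\mathcal{P}_\mathrm{con}}$ is self-adjoint, and by assumption it is strictly positive definite. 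Two routes are then available, mirroring the proof of Theorem~\ref{theorem_1} in the Appendix: a CoV/Lagrangian route, and a direct Cauchy--Schwarz route. I would use the CoV route to derive the candidate and the Cauchy--Schwarz route to certify optimality.

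\textbf{CoV step.} Form the Lagrangian functional
\begin{equation*}
\mathcal{L}[w] = \langle w,h\rangle\langle h,w\rangle - \mu\bigl(\langle w,\mathcal{X}_{\mathcal{P}_\mathrm{con}}w\rangle - Q\bigr).
\end{equation*}
Take the Wirtinger (Gateaux) variation with respect to $w^*$ along an arbitrary perturbation $\delta w\in L^2(\mathcal{S}_\mathrm{T})$. Setting the first variation to zero for all $\delta w$ gives the Euler--Lagrange integral equation
\begin{equation*}
h(\mathbf{p})\,\langle h,w\rangle = \mu\,[\mathcal{X}_{\mathcal{P}_\mathrm{con}}w](\mathbf{p}).
\end{equation*}
The left-hand side is a scalar multiple of $h$. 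Applying $\mathcal{X}^{-1}_{\mathcal{P}_\mathrm{con}}$ shows that every stationary point has the form $w = c\,\mathcal{X}^{-1}_{\mathcal{P}_\mathrm{con}}h$ for some scalar $c$. Substituting into the active constraint gives
\begin{equation*}
|c|^2\langle \mathcal{X}^{-1}_{\mathcal{P}_\mathrm{con}}h,h\rangle = |c|^2\Lambda = Q,
\end{equation*}
where $\Lambda$ is exactly~\eqref{eq:Lambda_def}. Hence $|c|=\sqrt{Q/\Lambda}$, and the common phase is immaterial because the objective is invariant to it. Choosing $c$ real recovers~\eqref{eq:w_opt_PD}.

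\textbf{Optimality certificate.} Stationarity alone does not establish a maximum, so I would close with a whitened Cauchy--Schwarz argument. Since $\mathcal{X}_{\mathcal{P}_\mathrm{con}}$ is positive and self-adjoint, it has a square root $\mathcal{X}^{1/2}_{\mathcal{P}_\mathrm{con}}$. Then
\begin{equation*}
|\langle h,w\rangle|^2 = \bigl|\langle \mathcal{X}^{-1/2}_{\mathcal{P}_\mathrm{con}}h,\,\mathcal{X}^{1/2}_{\mathcal{P}_\mathrm{con}}w\rangle\bigr|^2 \le \Lambda\,\langle w,\mathcal{X}_{\mathcal{P}_\mathrm{con}}w\rangle \le \Lambda Q .
\end{equation*}
Equality holds iff $\mathcal{X}^{1/2}_{\mathcal{P}_\mathrm{con}}w \propto \mathcal{X}^{-1/2}_{\mathcal{P}_\mathrm{con}}h$, i.e., iff $w\propto\mathcal{X}^{-1}_{\mathcal{P}_\mathrm{con}}h$. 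This confirms that $w^\star$ is the global maximizer, with optimal value $Q\Lambda$. It is the generalized matched filter analogue of Theorem~\ref{theorem_1}.

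\textbf{Main obstacle.} The delicate point is functional-analytic: the existence and boundedness of $\mathcal{X}^{-1}_{\mathcal{P}_\mathrm{con}}$, and the requirement $h\in\mathrm{dom}(\mathcal{X}^{-1/2}_{\mathcal{P}_\mathrm{con}})$ so that $\Lambda<\infty$. Under the regularity of Remark~\ref{rem:regularity}, $\mathcal{X}_{\mathcal{P}_\mathrm{con}}$ is Hilbert--Schmidt and hence compact. A compact operator on an infinite-dimensional space cannot have a bounded inverse, even if it is strictly positive. I would handle this by interpreting the theorem's hypothesis as requiring that $h$ lie in the range of $\mathcal{X}_{\mathcal{P}_\mathrm{con}}$, which is implicitly what is needed for~\eqref{eq:w_opt_PD} to define an $L^2$ function. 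Equivalently, one can read it in the discretized setting $\mathbf{X}=\mathbf{M}_c\trans(\cdot)\mathbf{M}_c$, possibly regularized as $\mathcal{X}_{\mathcal{P}_\mathrm{con}}+\epsilon\mathcal{I}$. In that setting the argument above is fully rigorous verbatim. I would state this caveat explicitly and otherwise treat the inverse kernel $X^{-1}_{\mathcal{P}_\mathrm{con}}$ formally, as the statement does.
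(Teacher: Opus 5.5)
Your proof is correct, and its CoV step follows the paper's proof in Appendix~\ref{theorem_2_proof} almost line for line. It uses the same Lagrangian and the same variation in $w^*$, giving the first-kind Fredholm equation $\mathcal{X}w=(c/\mu)\,a^*(\mathbf{r}_u,\cdot)$ (writing $\mathcal{X}$ for $\mathcal{X}_{\mathcal{P}_\mathrm{con}}$), followed by inversion and normalization via self-adjointness to $|c|/\mu=\sqrt{Q/\Lambda}$.

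The genuine difference is your whitened Cauchy--Schwarz certificate. The paper stops at stationarity: it only shows that a nontrivial KKT point must be proportional to $\mathcal{X}^{-1}a^*(\mathbf{r}_u,\cdot)$. In finite dimensions that would suffice, because $\{w:\langle w,\mathcal{X}w\rangle\le Q\}$ is compact when $\mathcal{X}\succ 0$, so a maximizer exists. For a compact $\mathcal{X}$ on $L^2(\mathcal{S}_\mathrm{T})$, however, this set is unbounded, so neither existence of a maximizer nor sufficiency of stationarity is automatic. Your chain $|\langle h,w\rangle|^2\le\Lambda\langle w,\mathcal{X}w\rangle\le\Lambda Q$ settles both at once. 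It also gives the optimal value $Q\Lambda$, the continuous counterpart of $Q\,\mathbf{a}^{\mathrm{H}}\mathbf{X}^{-1}\mathbf{a}$. And it makes explicit that the supremum is finite iff $h\in\mathrm{range}(\mathcal{X}^{1/2})$ and attained iff $h\in\mathrm{range}(\mathcal{X})$.

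Your observation that strict positivity gives injectivity but not a bounded inverse sharpens the paper's ``strictly positive definite and hence invertible'' step. It agrees with the caveat the paper places in Remark~\ref{rem:illposed} (the pseudo-inverse/Tikhonov reading).

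Two small tidy-ups:
\begin{itemize}
\item Solving the Euler--Lagrange equation for $w$ presumes $\mu\neq 0$. Stationary points with $\mu=0$ have $\langle h,w\rangle=0$, which is the minimum; the paper glosses over this too.
\item Equality in your chain requires the PD constraint to be active, in addition to $w\propto\mathcal{X}^{-1}h$. That is exactly what fixes $|c|$.
\end{itemize}
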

\begin{proof}
    Please refer to Appendix~\ref{theorem_2_proof}.
\end{proof}

\begin{remark}
Theorem~\ref{thm:PD_opt} is the continuous analogue of the discrete generalized matched 
filter $\mathbf{w}^\star \propto \mathbf{X}^{-1}\mathbf{a}$. The matrix $\mathbf{X}$ is 
replaced by the integral operator $\mathcal{X}_{\mathcal{P}_\mathrm{con}}$, the steering 
vector $\mathbf{a}$ by $a^*(\mathbf{r}_u,\cdot) \in L^2(\mathcal{S}_\mathrm{T})$, and the 
scalar $\mathbf{a}^H\mathbf{X}^{-1}\mathbf{a}$ by $\Lambda$ in~\eqref{eq:Lambda_def}. 
Note that $\Lambda = \|\mathcal{X}^{-1/2}_{\mathcal{P}_\mathrm{con}} a^*(\mathbf{r}_u,\cdot)\|^2_{L^2} > 0$
whenever $a(\mathbf{r}_u,\cdot)\not\equiv 0$ on $\mathcal{S}_\mathrm{T}$. As in
Theorem~1, the integrals in~\eqref{eq:w_opt_PD}--\eqref{eq:Lambda_def} admit
efficient evaluation via GL quadrature.
\end{remark}

\begin{remark}[Ill-Posedness and Regularization]
\label{rem:illposed}
The strict positive-definiteness assumed in Theorem~\ref{thm:PD_opt} is a working hypothesis rather than a generic property. Since $\mathcal{X}_{\mathcal{P}_\mathrm{con}}$ is a region-averaged Gram operator built from the compact steering operator, it is itself compact: its singular values accumulate at zero, so $0$ belongs to its spectrum, the inverse $\mathcal{X}^{-1}_{\mathcal{P}_\mathrm{con}}$ is unbounded, and the Fredholm equation of the first kind~\eqref{eq:Fredholm_PD} is ill-posed in the sense of Hadamard. Physically, the exclusion region $\mathcal{P}_\mathrm{con}$ constrains only a finite-dimensional subspace of feeding functions with dimension equal to the effective rank of the region channel ($r_{-40}=12$ for the $\mathcal{P}_\mathrm{con}$ of Fig.~\ref{fig:dof_scaling}) while leaving the complementary directions unconstrained. Theorem~\ref{thm:PD_opt} should therefore be interpreted as the exact minimum-norm solution on the range of $\mathcal{X}_{\mathcal{P}_\mathrm{con}}$, obtained in practice as the $\varepsilon\!\to\!0^+$ limit of the Tikhonov-regularized operator $\mathcal{X}_{\mathcal{P}_\mathrm{con}}+\varepsilon\mathcal{I}$; equivalently, $\mathcal{X}^{-1}_{\mathcal{P}_\mathrm{con}}$ is read as the Moore--Penrose pseudo-inverse. This is the exact continuous counterpart of the rank-deficient discrete generalized matched filter, where $\mathbf{X}^{-1}$ is likewise replaced by a regularized (diagonally-loaded) inverse.
\end{remark}

Fig.~\ref{fig:PD-bF} compares the normalized radiation patterns of the discrete $N = 16$ port beamformer and the proposed continuous ($K = 1{,}152$) beamformer, both designed via Theorem~\ref{thm:PD_opt} with PD budget $Q = 0.2 \cdot \mathrm{PD}_\mathrm{MF}$, where $\mathrm{PD}_\mathrm{MF}$ is the average PD in $\mathcal{P}_\mathrm{con}$ under uncontrolled matched-filter beamforming. 
The suppression region spans azimuth $[40^\circ, 80^\circ]$ at elevation $30^\circ$ and distances $\{1.0, 1.5, 2.0\}\lambda$.

\begin{figure}[t!]
    \centering
    \includegraphics[width=\columnwidth]{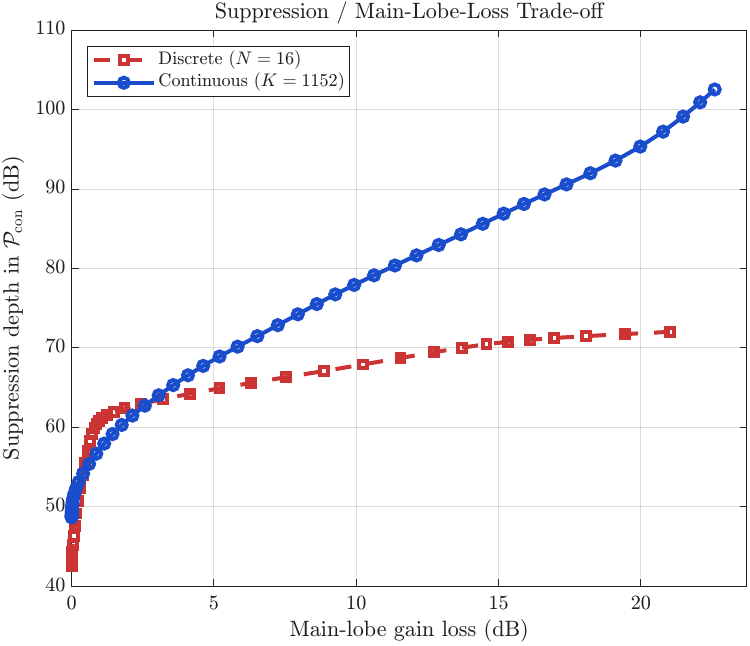}
    \caption{Pattern suppression in $\mathcal{P}_\mathrm{con}$ (region power density below the main-lobe gain) versus main-lobe gain loss, traced by a diagonal-loading sweep, with both beamformers on the physical continuous channel. The discrete ($N=16$) array saturates near $72\,\mathrm{dB}$ once its degrees of freedom are exhausted, whereas the continuous ($K=1{,}152$) design keeps deepening the null beyond $100\,\mathrm{dB}$; the curves cross near $2.5\,\mathrm{dB}$ of main-lobe loss.}\label{fig:tradeoff}
\end{figure}

Two suppression metrics are presented here and should not be conflated. In Figs.~\ref{fig:PD-bF} and~\ref{fig:tradeoff}, \emph{pattern suppression} is the average power density within $\mathcal{P}_\mathrm{con}$ relative to the main-lobe (target) gain, $10\log_{10}(G/\overline{\mathrm{PD}}_{\mathcal{P}_\mathrm{con}})$. Since it is a ratio of physical field quantities, it is directly comparable across the discrete and continuous parameterizations. (In Fig.~\ref{fig:PD-bF} the patterns are clamped at $-55\,\mathrm{dB}$ for legibility, so the depths quoted below lie beneath the visible floor.) In Fig.~\ref{fig:dof_scaling}, by contrast, \emph{suppression depth} is referenced to the unconstrained matched filter, $-10\log_{10}(\overline{\mathrm{PD}}_{\mathcal{P}_\mathrm{con}}/\mathrm{PD}_\mathrm{MF})$, and is on the order of $7$--$11\,\mathrm{dB}$ under the budget $Q = 0.2\,\mathrm{PD}_\mathrm{MF}$, hence the different scale.

This comparison reflects deployment as it would actually occur: the discrete beamformer is designed under the point-source model $\mathbf{K}_\mathrm{disc}$ of~\cite{CastellanosTWC2025} (the \ac{SotA} design assumption), the proposed beamformer under the accurate patch model $\mathbf{K}_\mathrm{cont}$, and both are then evaluated on the true physical channel.
In terms of the pattern-suppression metric, the discrete beamformer achieves an average suppression of approximately $74.7$~dB within $\mathcal{P}_\mathrm{con}$, while the proposed continuous beamformer achieves $77$~dB suppression, meaning a gain of $2.3$~dB, at a main-lobe loss of less than $1$~dB relative to unconstrained matched filtering.
Quantitatively, the \emph{actual} PD incurred by the discrete beamformer in $\mathcal{P}_\mathrm{con}$ is
$\mathrm{PD}_\mathrm{disc} / Q \approx 82.5\%$ of the budget, while the continuous beamformer achieves exactly $100\%$ utilization of $Q$, confirming Lemma~\ref{lem:active_PD}.
At this shallow operating point the advantage stems chiefly from the proposed model's near-field accuracy: the budget shortfall quantified above is a direct symptom of the discrete beamformer's point-source design mismatch on the true channel, from which the accurately-modeled continuous framework is free. The complementary, and ultimately larger, benefit of the expanded spatial degrees of freedom emerges in the deep-suppression regime of Fig.~\ref{fig:tradeoff}.

The single operating point of Fig.~\ref{fig:PD-bF} generalizes to the full trade-off in Fig.~\ref{fig:tradeoff}, obtained by sweeping a diagonal-loading parameter $\mu$ in the regularized beamformer $w(\mu)\propto(\mathcal{X}_{\mathcal{P}_\mathrm{con}}+\mu\mathcal{I})^{-1}a^*(\mathbf{r}_u,\cdot)$, so that each point trades main-lobe gain for pattern suppression. In contrast to Fig.~\ref{fig:PD-bF}, here the discrete beamformer is \emph{also} granted the accurate model $\mathbf{K}_\mathrm{cont}$: both designs and evaluations use the true physical channel, so the comparison removes the modeling-accuracy effect of Fig.~\ref{fig:PD-bF} and isolates the influence of the degrees of freedom alone. The two are essentially equivalent for shallow suppression, where the discrete array's sixteen full-vector degrees of freedom already suffice. As deeper suppression is demanded, however, their behavior diverges sharply: the discrete pattern \emph{saturates} near $72\,\mathrm{dB}$. This is because its finite spatial degrees of freedom are exhausted, and no further main-lobe sacrifice buys additional suppression, whereas the continuous beamformer continues to deepen the null beyond $100\,\mathrm{dB}$, the two curves crossing near $2.5\,\mathrm{dB}$ of main-lobe loss. In the deep-suppression regime demanded by electromagnetic-exposure compliance and strong interference nulling, the continuous framework is therefore decisively superior, precisely because it is not bound by the port count.
The absolute suppression depths shown are those of the idealized noiseless model with perfect channel knowledge and would, in practice, be floored by hardware imperfections and channel-estimation error. The robust and physically meaningful conclusion is thus the \emph{qualitative} insight that the finite-port array saturates whereas the continuous aperture does not, rather than the specific decibel values.

\section{Conclusion}
\label{sec:conclusion}

We have presented a unified \ac{CoV} framework for the characterization of, and beamforming over, continuous \ac{EM} manifolds of arbitrary \ac{MIMO} array geometries, simultaneously resolving the three principal limitations of the discrete \ac{SotA}.
First, a patch-based \ac{GL} radiation operator was shown to replace the point-source centroid approximation of~\cite{CastellanosTWC2025}, delivering consistent near-field accuracy improvements across every configuration tested, including linear and planar, dipole and bowtie, half-wavelength and wide-spaced, at a computational cost that differs only by a constant factor independent of the array size.
Second, a continuous feeding function $w(\mathbf{p})\in L^2(\mathcal{S}_\mathrm{T})$, introduced as the infinite-dimensional limit of the $N$-port network, was shown to lift the beamforming space from the $N$-dimensional port subspace onto a hardware-decoupled $K$-dimensional subspace, with a spectral analysis of the steering operator quantifying the additional radiation modes thereby unlocked.
Third, exploiting this continuous formulation, we derived closed-form optimal beamformers via the \ac{CoV} for both field-strength maximization and region-constrained near-field pattern synthesis under a \ac{PD} constraint, and established their exact analogy to the discrete matched and generalized matched filters.
Numerical results confirmed that the continuous beamformer attains markedly sharper spatial suppression than its discrete counterpart precisely in the over-constrained regional-synthesis regime where finite-port arrays are fundamentally deficient, while incurring no penalty in the rank-one focusing regime where the \ac{SotA} is already optimal.
Future work will address the design of structured $N$-port feed networks whose aggregate response approaches the continuous aperture bound, the extension to multi-user and volumetric geometries, and the incorporation of the framework into near-field sensing and \ac{ISAC} pipelines.

\begin{appendices}

    \section{Proof of Theorem \ref{theorem_1}} \label{theorem_1_proof}

    This appendix provides the proof for the optimal continuous beamformer stated in Theorem \ref{theorem_1}. 
    The proof is derived using the calculus of variations based on the \ac{KKT} conditions. 
    The Lagrangian function of problem \eqref{eq:field_strength_max} is given by 
    \begin{align}
        \mathcal{L}(w) = & \bigg|\int_{\mathcal{S}_\mathrm{T}} w(\mathbf{p}) a(\mathbf{r}, \mathbf{p}) \, {\rm d}\mathbf{p} \bigg|^2 \!\!- \mu \bigg( \int_{\mathcal{S}_\mathrm{T}} \big| w(\mathbf{p}) \big|^2 \, {\rm d}\mathbf{p} - P \bigg),
    \end{align}
    where $\mu \ge 0$ is the Lagrange multiplier for the power constraint. 
    
    Following the principles of the calculus of variations, the optimal $w(\mathbf{p})$ maximizing $\mathcal{L}(w)$ is found where the first variation of $\mathcal{L}(w)$, denoted by $\delta \mathcal{L}(w, \delta w)$ is zero for any perturbation $\delta w$. 
    The first variation can be expressed as
    \begin{align}
        \delta \mathcal{L}(w, \delta w) & \!=\! \left. \frac{d}{d \epsilon} \mathcal{L}(w + \epsilon \delta w) \right|_{\epsilon = 0}\!\!\! = 2\Re\!\left\{ \int_{\mathcal{S}_\mathrm{T}} \!\!\delta w^*(\mathbf{p}) \chi(\mathbf{p}) {\rm d}\mathbf{p}\! \right\}\!,
    \end{align} 
    where
    \begin{equation}
         \chi(\mathbf{p}) \triangleq \bigg(\int_{\mathcal{S}_\mathrm{T}} w(\mathbf{z}) a(\mathbf{r}, \mathbf{z}) \, {\rm d}\mathbf{z}\bigg) a^*(\mathbf{r}, \mathbf{p}) - \mu w(\mathbf{p}).
    \end{equation}
    
    For the functional $\mathcal{L}(w)$ to be at a maximum, we must have $\delta \mathcal{L}(w, \delta w) = 0$ for any arbitrary $\delta w$, implying $\chi(\mathbf{p})$ must be zero, i.e., 
    \begin{equation}
        \mu\, w(\mathbf{p}) = a^*(\mathbf{r}, \mathbf{p}) \int_{\mathcal{S}_\mathrm{T}} w(\mathbf{z}) a(\mathbf{r}, \mathbf{z}) \, {\rm d}\mathbf{z}.
    \end{equation}

    This can be written in Fredholm form as
    \begin{equation}
    \mu\,w(\mathbf{p})
    =
    \int_{\mathcal{S}_\mathrm{T}}
    K(\mathbf{p},\mathbf{z})\,w(\mathbf{z})\,{\rm d}\mathbf{z},
    \end{equation}
    where the kernel is defined as
    \begin{equation}
    K(\mathbf{p},\mathbf{z})
    \triangleq
    a^*(\mathbf{r},\mathbf{p})\,a(\mathbf{r},\mathbf{z}) \in \mathbb{C}.
    \end{equation}

    Since $K(\mathbf{p},\mathbf{z})$ is separable and rank-one, the corresponding eigenspace is one-dimensional, implying
    \begin{equation}
    w^\star(\mathbf{p}) \propto a^*(\mathbf{r},\mathbf{p}).
    \end{equation}
    
    Finally, enforcing the power constraint
    \begin{equation}
    \int_{\mathcal{S}_\mathrm{T}} |w(\mathbf{p})|^2{\rm d}\mathbf{p} = P,
    \end{equation}
    yields the normalized optimal solution
    \begin{equation}
    w^\star(\mathbf{p})
    =
    \sqrt{\frac{P}{\int_{\mathcal{S}_\mathrm{T}} \big| a(\mathbf{r}, \mathbf{p}) \big|^2 \, {\rm d}\mathbf{p}}} \; a^*(\mathbf{r}, \mathbf{p}).
    \end{equation}

    This completes the proof.

    \section{Proof of Theorem~\ref{thm:PD_opt}} \label{theorem_2_proof}

    By Lemma~\ref{lem:active_PD}, the inequality constraint~\eqref{eq:PD_const} is active at the
    optimum, so we solve the equality-constrained problem. 
    The Lagrangian can then be expressed as
    \begin{align}
        \mathcal{L}(w)
        &= \biggl|\int_{\mathcal{S}_\mathrm{T}} w(\mathbf{p})\,a(\mathbf{r}_u,\mathbf{p})\,\mathrm{d}\mathbf{p}\biggr|^2
        \nonumber \\
        &\hspace{-3ex}- \mu \biggl(
          \int_{\mathcal{S}_\mathrm{T}}\!\int_{\mathcal{S}_\mathrm{T}}
          w^*(\mathbf{p})\,X_{\mathcal{P}_\mathrm{con}}(\mathbf{p},\mathbf{p}')\,w(\mathbf{p}')\,
          \mathrm{d}\mathbf{p}\,\mathrm{d}\mathbf{p}' - Q
        \biggr),
    \end{align}
    where $\mu \geq 0$ is the Lagrange multiplier. 
    
    Defining $c \triangleq \int_{\mathcal{S}_\mathrm{T}} w(\mathbf{p})\,a(\mathbf{r}_u,\mathbf{p})\,\mathrm{d}\mathbf{p} \in \mathbb{C}$,
    and following the \ac{CoV} approach of Appendix \ref{theorem_1_proof}, the first variation with respect to $w^*$ is given by
    \begin{equation}
        \delta\mathcal{L}(w,\delta w) = 2\,\mathrm{Re}
        \int_{\mathcal{S}_\mathrm{T}} \delta w^*(\mathbf{p})\,\chi(\mathbf{p})\,\mathrm{d}\mathbf{p},
    \end{equation}
    where
    \begin{equation}
        \chi(\mathbf{p}) \triangleq
        c\,a^*(\mathbf{r}_u,\mathbf{p})
        - \mu \int_{\mathcal{S}_\mathrm{T}} X_{\mathcal{P}_\mathrm{con}}(\mathbf{p},\mathbf{p}')\,w(\mathbf{p}')\,\mathrm{d}\mathbf{p}'.
    \end{equation}
    
    Requiring $\delta\mathcal{L} = 0$ for all perturbations $\delta w$ implies $\chi(\mathbf{p}) = 0$,
    which yields the Fredholm integral equation of the first kind
    \begin{equation}\label{eq:Fredholm_PD}
        \bigl(\mathcal{X}_{\mathcal{P}_\mathrm{con}}\,w\bigr)(\mathbf{p})
        = \frac{c}{\mu}\,a^*(\mathbf{r}_u,\mathbf{p}).
    \end{equation}
    
    Since $\mathcal{X}_{\mathcal{P}_\mathrm{con}}$ is strictly positive definite and hence invertible (or, in the rank-deficient case, invertible on its range in the regularized sense of Remark~\ref{rem:illposed}),
    equation~\eqref{eq:Fredholm_PD} has the unique solution
    \begin{equation}\label{eq:w_unnorm_PD}
        w(\mathbf{p}) = \frac{c}{\mu}\,
        \bigl[\mathcal{X}^{-1}_{\mathcal{P}_\mathrm{con}}\,a^*(\mathbf{r}_u,\cdot)\bigr](\mathbf{p}),
    \end{equation}
    confirming $w^\star(\mathbf{p}) \propto [\mathcal{X}^{-1}_{\mathcal{P}_\mathrm{con}}\,a^*(\mathbf{r}_u,\cdot)](\mathbf{p})$.
    
    \textit{Normalization.}
    Substituting~\eqref{eq:w_unnorm_PD} into the active PD constraint and using the self-adjointness of $\mathcal{X}_{\mathcal{P}_\mathrm{con}}$ yields
    \begin{align}
        Q &\!=\! \frac{|c|^2}{\mu^2} \!\!
        \int_{\mathcal{S}_\mathrm{T}}\!\int_{\mathcal{S}_\mathrm{T}}
        \!\!\!\!\bigl[\mathcal{X}^{-1} \! a^*\bigr]\!^*\!(\mathbf{p})
        X_{\mathcal{P}_\mathrm{con}}(\mathbf{p},\mathbf{p}')
        \bigl[\mathcal{X}^{-1} \! a\!^*\!\bigr]\!(\mathbf{p}')\,
        \mathrm{d}\mathbf{p}\,\mathrm{d}\mathbf{p}'
        \nonumber \\
        &= \frac{|c|^2}{\mu^2}\,
        \bigl\langle
          \mathcal{X}^{-1}_{\mathcal{P}_\mathrm{con}}\,a^*,\;
          \mathcal{X}_{\mathcal{P}_\mathrm{con}}\,\mathcal{X}^{-1}_{\mathcal{P}_\mathrm{con}}\,a^*
        \bigr\rangle_{L^2}
        \nonumber \\
        &= \frac{|c|^2}{\mu^2}\,
        \bigl\langle
          \mathcal{X}^{-1}_{\mathcal{P}_\mathrm{con}}\,a^*,\; a^*
        \bigr\rangle_{L^2},
        \label{eq:norm_step}
    \end{align}
    where $\langle f, g \rangle_{L^2} \triangleq \int_{\mathcal{S}_\mathrm{T}} f^*(\mathbf{p})\,g(\mathbf{p})\,\mathrm{d}\mathbf{p}$.
    
    Expanding the inner product in~\eqref{eq:norm_step} using the Hermitian symmetry of the
    inverse kernel, $\overline{X^{-1}_{\mathcal{P}_\mathrm{con}}(\mathbf{p},\mathbf{p}')}
    = X^{-1}_{\mathcal{P}_\mathrm{con}}(\mathbf{p}',\mathbf{p})$, and relabelling the integration
    variables, one obtains
    \begin{align}
        &\bigl\langle \mathcal{X}^{-1}_{\mathcal{P}_\mathrm{con}}\,a^*,\;a^*\bigr\rangle_{L^2}
        \nonumber \\
        &= \int_{\mathcal{S}_\mathrm{T}}\!\int_{\mathcal{S}_\mathrm{T}}
          a(\mathbf{r}_u,\mathbf{p})\,X^{-1}_{\mathcal{P}_\mathrm{con}}(\mathbf{p},\mathbf{p}')\,
          a^*(\mathbf{r}_u,\mathbf{p}')\,\mathrm{d}\mathbf{p}\,\mathrm{d}\mathbf{p}'
        = \Lambda.
    \end{align}
    
    Hence $|c|/\mu = \sqrt{Q/\Lambda}$, and substituting back into~\eqref{eq:w_unnorm_PD}
    yields~\eqref{eq:w_opt_PD}. This completes the proof. \hfill$\blacksquare$

\end{appendices}

\bibliographystyle{IEEEtran}
\bibliography{references}

\end{document}